\documentclass[reprint,amsmath,amssymb,aps,superscriptaddress, nofootinbib]{revtex4-2}
\usepackage[colorlinks=true,allcolors=blue]{hyperref}
\usepackage{graphicx}
\usepackage{dcolumn}
\usepackage[capitalize]{cleveref}

\usepackage{booktabs}
\usepackage{amsthm}
\usepackage{xcolor}
\definecolor{turquoise}{HTML}{39A78E}

\newcommand{\avg}[1]{\langle#1\rangle}

\usepackage{bm}
\usepackage{bbm}
\usepackage{algorithm}
\usepackage{algpseudocode}
\usepackage{tcolorbox}

\usepackage{verbatim}
\usepackage[normalem]{ulem}

\newcommand{\Ze}{\mathcal{Z}}
\newcommand{\X}{\mathcal{X}}
\newcommand{\Ce}{\mathcal{C}}
\newcommand{\eps}{\varepsilon}
\def\q{\bar q}
\def\MC{\mathcal{M}}
\def\argmin{\operatorname*{arg\,min}}
\def\Cres{\mathcal{C}_{\mathrm{res}}}

\newtheorem{thm}{Theorem}[section]

\theoremstyle{definition}

\newcommand{\bs}[1]{\boldsymbol{#1}}

\makeatletter
\newcommand{\equalcontrib}{%
  \protect\frontmatter@footnote{%
    These authors contributed equally to this work.%
  }%
}
\makeatother

\begin{document}

\preprint{APS/123-QED}

\title{Strengthened second law for periodic processes}
\author{Jake Schaefer\equalcontrib}
\affiliation{Program in Applied and Computational Mathematics, Princeton University, Princeton, NJ 08544, USA}
\author{Ben Ansbacher\equalcontrib}
\email{bdansbacher@gmail.com}
\affiliation{Santa Fe Institute, Santa Fe, NM 87501, USA}
\author{Jan Korbel}
\affiliation{Complexity Science Hub, Metternichgasse 8, 1030, Vienna, Austria}
\author{David Wolpert}
\affiliation{Santa Fe Institute, Santa Fe, NM 87501, USA}
\affiliation{Complexity Science Hub, Metternichgasse 8, 1030, Vienna, Austria}
\date{July 2026}

\begin{abstract}
Many physical systems evolve under periodic driving: the same control protocol is applied again and again, even though the state of the system itself need not return to where it started after each cycle. We derive a physics-independent lower bound on the entropy production of \emph{any} periodic process modeled by the evolution of an initial distribution $p_0$ by a repeated application of the same map $G$. This is a strict strengthening of the second law of thermodynamics for periodic processes. It does not require that the single-period dynamics arise from a CTMC, satisfy (local) detailed balance, or be subject to other typical restrictions. We discuss its application to spins in the Curie-Weiss model and Deterministic Finite Automata, with possible extensions to other uniform computers.
\end{abstract}

\maketitle

\emph{Introduction---}
Many physical systems evolve under periodic driving: the same control protocol is applied again and again, even though the state of the system itself need not return to where it started after each cycle. This kind of periodicity is ubiquitous. Synchronous digital computers advance their internal state once per clock cycle \cite{wolpert_stochastic_2019}; chemical reaction networks are routinely driven by periodically modulated reservoirs or reagent inflow rates \cite{lawson2020chemicalresonance}; and molecular motors are forcibly periodically rotated by other motors to help efficiently synthesize ATP \cite{mishima2025efficiently}. Synthetic genetic oscillators have even been engineered to mimic this kind of cyclic behavior \cite{elowitz2000repressilator}. A particularly striking natural example is the \emph{KaiC} protein, whose autonomous phosphorylation cycle constitutes a circadian clock that lets a cyanobacterium track the time of day \cite{dong2008cyanobacterium}.

Regardless of the details, whenever such a periodic process is implemented physically, it carries a thermodynamic cost: work must be supplied to drive the cycle, and heat is dissipated to the environment in doing so. Additionally, any periodic process must be, almost by definition, out of equilibrium. Stochastic thermodynamics is the natural framework for quantifying this cost in systems, like the ones above, that operate at the mesoscopic scale and arbitrarily far from equilibrium \cite{vandenbroeck2015ensemble,seifert2008principles,Seifert_2012}. It does so by assigning thermodynamic quantities, such as work, heat, and entropy production, the latter of which roughly quantifies the irreversible component of heat dissipation to the environment, to individual fluctuating trajectories rather than only to ensemble averages. 

A number of studies have used this framework to analyze the entropy production of specific periodically-driven systems, including small heat engines, molecular pumps, quantum dots, and Brownian-clock models of protein cycling \cite{barato2017periodic,Proesmans_2016,barato2016brownianclock}. These typically proceed by deriving a fluctuation theorem for the relevant probability currents: a relation constraining the ratio of forward to time-reversed path probabilities in terms of the trajectory EP. Such derivations can be analytically demanding. More importantly, the resulting bounds are tied to the specific physical setting under study, since they depend on the particular heat flows and currents realizing the protocol. They oftentimes also assume the dynamics in each period evolve according to a continuous-time Markov chain (CTMC), which can be a restrictive assumption. A separate line of work has instead considered oscillatory processes whose cycle time itself fluctuates from one period to the next, even when a well-defined natural frequency exists \cite{gopal2024electronicclock}.

Here we consider a complementary and more general setting: processes generated by repeating, once per period, the same underlying dynamics. We write $G$ for the (not necessarily Markovian, not necessarily physical) map describing how a distribution over states of the system evolves over one period. Because $G$ is reapplied at the start of each cycle to whatever distribution the previous cycle produced, the state distribution $p(t)$ need not itself be periodic, even though the dynamics generating it is.

Recent work has shown that a broad family of thermodynamic costs, ordinary EP among them, can be decomposed in terms of a \emph{mismatch cost} (MMC), also called \emph{intrinsic dissipation} \cite{mmc_2017,premmc2021}, which is the excess cost incurred when a process is initialized away from the particular distribution that minimizes its thermodynamic cost. That optimal initial distribution, the \emph{prior}, encodes everything about the process that is thermodynamically relevant; in general, it depends on the physical details of the implementation, e.g., on how heat flows during the process, and so is typically unknown or hard to compute in practice. However, mismatch cost has still recently been used to study the stochastic thermodynamics of several computational and physical systems \cite{message_passing_2026, yadav2024mismatch, yadav2026entropy, kolchinsky_circuits, Ouldridge_2023, tasnim2023stochastic} by choosing illustrative forms for the prior. 

Using the MMC, we derive a prior-independent lower bound on the EP of a periodic process by minimizing the mismatch cost over all possible priors. We show the minimizing prior, which we call the \emph{periodic-optimal prior}, is simply the time average of the state distribution over one period, and that the resulting bound depends only on the initial distribution $p_0$ and the map $G$, with no reference to any physical implementation of the dynamics. Because it follows directly from the MMC decomposition, our bound holds even when the dynamics within a single period is non-Markovian, extending beyond the Markovian setting assumed in essentially all prior work on periodic EP. We further show the bound is strictly positive whenever $G$ is not logically invertible. This is a strict strengthening of the second law of thermodynamics for any periodic process. This also applies trivially to any continuous-time homogeneous process, and a small subset of time-heterogeneous processes. 

The remainder of this paper proceeds as follows. We first review the relevant elements of stochastic thermodynamics and the mismatch-cost decomposition, and use them to present our main results. We then illustrate the bound is a significant fraction of the total EP in a simple nonequilibrium spin system. Finally, motivated by the fact that quantifying the energetic cost of computation is itself a central application of stochastic thermodynamics \cite{wolpert_stochastic_2019}, we turn to \emph{uniform computers} such as deterministic finite automata (DFAs), whose logical operation is periodic by construction -- one clock cycle per computational step. The generality of the result, however, means it is not limited to these two examples.
\newline

\emph{Stochastic Thermodynamics---} We consider systems with a finite set of states $\mathcal X=\{1,\ldots,d\}$
with probability distributions $p\in\Delta_{\mathcal X}$,
where $\Delta_{\mathcal X}$ is the probability simplex for states $\mathcal X$. The evolution of a physical process over one cycle
is represented by a stochastic map $G$, 
such that its initial distribution $p_t$ is mapped to the final distribution
\begin{equation}
    p_{t+1} = G p_t.
\end{equation}
Throughout this work we assume the same map $G$ is applied repeatedly, corresponding to a periodically driven protocol, either autonomous or with uniform period $\tau$, so that $p_\tau=G p_0$. We emphasize that the dynamics implementing $G$ need not describe a Markovian process within one period on $\mathcal X$. That is, $G$ as a map over a finite time interval does not need to be obtained from a generator of a rate matrix of a continuous-time Markov chain. $G$ specifies only the effective input-output map over a complete cycle.

A central quantity in stochastic thermodynamics is the entropy production (EP), which quantifies the irreversible contribution to the total entropy change of the system and its environment over a single application of the process $G$. 
For nonequilibrium processes, the total EP can be written as follows \cite{seifert2008principles}:
\begin{equation}\label{eq:ep}
    \Sigma_t
    = S(p_t)-S(p_0)+\Phi,
\end{equation}
where $S(p)=-\sum_x p(x)\log p(x)$ is the Shannon entropy and $\Phi$ is the expected entropy flow to the environment (e.g., proportional to the heat dissipated into one or more thermal reservoir when local detailed balance holds). By the second law, $\avg{\Sigma_t}\geq0$ for any physically valid process. 
The entropy flow  $\Phi$ depends on intermediate states in the transition $p_0\to p_t$, and thus on the physical mechanism realizing it, whereas the entropy drop $S(p_t)-S(p_0)$ does not, so 
the thermodynamic cost of a process generally depends on both the initial distribution $p_0$ and the details of its implementation. This distinction is the starting point for the mismatch-cost framework described below.
\newline

\emph{Mismatch cost---} We now introduce the mismatch cost decomposition of general thermodynamic cost. A broad class of thermodynamic cost functions $\mathcal{C}:\Delta_{\mathcal{X}}\to\mathbb{R}$ can be written in the form
\begin{equation}\label{eq:cost_form1}
    \mathcal{C}(p_0)
    = S(Gp_0)-S(p_0)+F(p_0),
\end{equation}
where $G$ is the transition matrix describing the process which maps the initial distribution $p_0$ to an associated ending distribution $p_1$, and $F$ is a linear functional of $p_0$, though it may incorporate values accumulated over the entire trajectory. In particular, if $F$ is the aforementioned expected entropy flow to the environment $\Phi$, then $\mathcal{C}$ is the total irreversible entropy production in \eqref{eq:ep}. 

It is natural to consider what distributions are entropically optimal for a thermodynamic cost function of this form:
\begin{equation}\label{eq:q_argmin}
    q \in \argmin_{p_0\in\Delta_\mathcal X}\mathcal{C}(p_0),
\end{equation}
where $q$ is known as the \emph{prior}. 
A fundamental result \cite{mmc_2017, premmc2021} is that the cost function admits the following decomposition in terms of the prior: 
\begin{equation}\label{eq:cost}
    \mathcal{C}(p_0)
    = D(p_0\|q)-D(Gp_0\|Gq) + \mathcal{C}(q),
\end{equation}
where $D(p_0\|q)$ 
is the Kullback–Leibler (KL) divergence between $p_0$ and $q$.
The first two terms in the decomposition constitute the \emph{mismatch cost} (MMC), while $\Cres:=\mathcal{C}(q)$ is the baseline \emph{residual cost}. 
The MMC quantifies the excess thermodynamic cost arising from a `mismatch' between the actual initial distribution and the optimal prior, and, due to the data-processing inequality for KL divergence, 
\begin{equation}
    \MC(p_0):=D(p_0\|q)-D(Gp_0\|Gq)\geq 0,
\end{equation}
with equality at $p_0=q$. Thus, whenever the residual cost is nonnegative, as is true for entropy production, the MMC provides a lower bound on the thermodynamic cost. 

Various important thermodynamic cost measures, including total EP, the change in nonequilibrium free energy, and nonadiabatic EP, can all be expressed in the form of \eqref{eq:cost}, meaning the MMC sets a lower bound on all of them. Different cost functions of this form can have different minimizing priors $q$, though the decomposition \eqref{eq:cost} has the same structure regardless. Moreover, recent work has shown that the MMC can account for a substantial part of the total thermodynamic cost in a variety of settings \cite{yadav2024mismatch}. In the scenario of thermal relaxation, the MMC accounts for all of the EP, since the residual cost is zero, which we discuss alongside one of the main results in Appendix \ref{ap:intuitive}. In what follows, we mainly consider the cost function to represent EP, and $F$ is the expected entropy flow to one or more thermal reservoirs. 

In some instances the $q$ minimizing the cost function in \eqref{eq:cost_form1} is not unique because the state space is partitioned into what are called the `islands' of $G$ \cite{kolchinsky_circuits}. Broadly speaking, there is a relation $x\sim x'$ if there is a non-zero probability of transitioning to the state $y$ from both $x$ and $x'$ under $G$. An \emph{island} of $G$ is defined to be any connected subset of $\mathcal{X}$ given by the transitive closure of this relation. We discuss islands more formally in Appendix \ref{ap:islands}, as they do not significantly impact the main results. 
\newline

\emph{Periodic MMC---} We now specialize the MMC framework to periodic processes. Consider repeatedly applying the same process $G$ to a system, where $G$ encodes the single period dynamics. Since the physical process is identical at every cycle, the associated prior $q$ 
(the distribution that minimizes the thermodynamic cost) 
is likewise the same for each iteration \cite{Ouldridge_2023}. Starting from an initial distribution $p_0$, the accumulated mismatch cost after $N$ applications of the process is
\begin{equation}
    \MC_N(p_0)=\sum_{t=0}^{N-1} \MC(G^tp_0).
    \label{eq:periodic_mmc}
\end{equation}
The prior $q$ is determined by the underlying physical implementation of the process. Consequently, modifying the implementation changes the prior and therefore the accumulated MMC. This raises a natural question: \emph{which prior minimizes the mismatch cost of a periodic process?} Moreover, using this prior, what is the minimum periodic MMC for a periodic process, and, therefore, minimum entropy production? The remainder of this paper answers this optimization problem. 
\newline

\emph{Main result---} 
Our first main result is that, for any transition matrix $G$ associated with a periodic process and any initial distribution $p_0$, a prior $\q$ that minimizes the accumulated MMC is
\begin{equation}
    \q
    :=\frac{1}{N}\sum_{t=0}^{N-1}p_t\in\argmin_{q\in\Delta_\mathcal{X}} \MC_N(q),
    \label{eq:qstar}
\end{equation}
where $p_t=G^tp_0$\footnote{This requires computation of higher powers of $G$, with time complexity $O(Nd^3)$, where $d$ is the dimension of the state space, when aggregating $N$ matrix multiplications}. The proof is given in Appendix~\ref{ap:weighted_pop}. In fact, for a single island\footnote{Conditions for multi-island uniqueness are given in Appendix. \ref{ap:uniqueness}.}, provided $G$ is irreducible and $\q$ has full support, $\q$ is the \emph{unique} prior that minimizes MMC:
\begin{equation}
    \argmin_{q\in\Delta_\mathcal{X}} \MC_N(q)=\{\q\}.
\end{equation}
The proof is given in Appendix~\ref{ap:uniqueness}.
We will henceforth refer to this $\q$ as the \emph{periodic-optimal prior} (POP). The periodic-optimal prior is actually the temporal average of the system's state over $N$ iterations. 

Note that the support of $\q$ is not necessarily full, as it is the union of the supports of $p_0,p_1,\ldots p_{N-1}$. Since we know that the priors that arise as a minimizer of the cost function \eqref{eq:cost} must have full support (as long as $G$ is not deterministic), this \emph{can} become a strict lower bound if $\q$ can never really exist for any choice of $G$ and $F$ \cite{mmc_2017}. 

Substituting the POP into Eq.~\eqref{eq:periodic_mmc} yields a prior-independent lower bound on the accumulated MMC for \emph{any} periodic process: 
\begin{equation}
    \mathcal{C}_N(p_0)\geq\MC_N(q) 
    \geq \MC_N(\q) \geq 0,
\end{equation}
provided $\mathcal{C}$ is nonnegative. Here $\mathcal{C}_N(p_0)$ is the cost after $N$ iterations. However, with $q$ being the minimizer of residual cost, additional cost is incurred when using $\q$, coming from the residual cost. Regardless, this still lower bounds the total cost for any choice of $q$. Furthermore, we have the following equalities for $\MC_N^*:=\mathcal{M}_N(\q)$:
\begin{align}
    \MC_N^*&=
    \sum_{t=0}^{N-1}
    \Bigl[
    D(p_t\|\q)
    -D(p_{t+1}\|G\q)
    \Bigr]
    \label{eq:mmc_kldiv}\\
    &=
    S(p_N)-S(p_0)+N\!\left[S(\q)-S(G\q)\right]
    \label{eq:mmc}\\
    &=
    N[\operatorname{JS}(p_0,\ldots,p_{N-1})
    -\operatorname{JS}(p_1,\ldots,p_N)].\label{eq:mmc_js}
\end{align}
In brief, the second equality follows from the definition of $\q$ together with a telescoping sum, and the final equality follows from the definition of the generalized Jensen--Shannon (JS) divergence and the observation that $\q$ is the arithmetic mean of the distributions $p_0,Gp_0,\ldots,G^{N-1}p_0$ (see Appendix \ref{ap:weighted_mmc_bound} for more details and an alternative derivation). If the state space has cardinality $|X| =d$, one can show that $\MC_N^* \leq 2\log d + \log N + 1$, with details in Appendix. \ref{ap:upper}. 

Using the data processing inequality for the KL divergence, one can further show that the bound is \emph{strictly positive} for $N>1$ unless $G$ is logically invertible or $p_0$ is a fixed point of $G$. Since mismatch cost lower bounds EP, equivalently the dissipation, we will refer to this lower bound $\MC_N^*$ as the \emph{minimal dissipation} \cite{gulce2024}. The full proof is provided in Appendix~\ref{ap:weighted_mmc_bound}. Moreover, the total cost for using $\q$ is thus $\MC_N^* + N \Cres$. In addition, this formulation gives a decomposition of the cost function in terms of the periodic optimal prior:
\begin{equation}
    \label{eq:cost_qbar}
    \mathcal{C}_N(p_0) = \mathcal{M}_N(\bar q) + N\mathcal{M}(\bar q, q) + N\mathcal{C}(q),
\end{equation}
where $\mathcal{M}(\bar q, q)$ is the mismatch cost of using initial distribution $\bar q$ with prior $q$. 

In the case where we have to consider the islands of $G$, the form of $\MC^*_N$ \emph{does not} change, but the POP becomes a weighted sum over the POP's of the individual islands. Details can be found in Appendix. \ref{ap:islands}. 

Importantly, this is a lower bound on any cost function of the form in \eqref{eq:cost}, and depends only on the logical dynamics encoded by $G$ and the initial distribution $p_0$; it makes no reference to the prior $q$ which encodes the underlying physics. Moreover, because it is derived solely from the MMC decomposition, it does not require the dynamics implementing $G$ over a single period to be Markovian. The bound therefore holds for \emph{any} periodic process, regardless of the underlying physics or the connectedness of the space. This is a \emph{strict} strengthening of the Second Law of Thermodynamics for arbitrary periodic processes. 
\newline

\emph{Asymptotic behavior--- }We are now focused on how the POP and associated lower bound behave for large $N$. As a corollary, we show that if $G$ is irreducible, though not necessarily aperiodic, then in the limit $N\to\infty$ the POP converges with correction $O(1/N)$ to the unique fixed point $\pi$ of $G$, i.e.,
\begin{equation}
    \q\longrightarrow\pi. \label{eq:qtopi}
\end{equation} 
The trajectory $p_t$ converges exponentially to the fixed point, while the POP converges polynomially. In addition, we consider the limiting behavior for $\MC_N^*$ given by \eqref{eq:mmc}; If $G$ is again irreducible, yet also aperiodic, then
\begin{equation}\label{eq:kl_p0_pi}
    \MC_\infty^*
    = \lim_{N\to \infty} \MC_N(\q) 
    = D(p_0||\pi),
\end{equation}
where the convergence again has correction $O(1/N)$. Moreover, this shows the minimal dissipation for a periodic process (of this type) is bounded with respect to $N$. The proofs of these statements can be found in Appendix. \ref{ap:asymp}. 

If $\pi$ is equilibrium, then this asymptotic lower bound is the nonequilibrium free energy $\mathcal{F}(p_0)$: the maximum extractable work from a nonequilibrium system with a protocol. This result appears strange at first glance, so we provide an intuitive justification for why we should expect this in Appendix. \ref{ap:intuitive} that considers a CTMC modeling thermal relaxation. 
\newline

\emph{Nonequilibrium steady-states---} 
It is important to recognize that in a periodic process with nonequilibrium steady-state (NESS) $\pi$, the physical prior $q$, i.e., the minimizer of the single-step cost $\mathcal{C}$ in \eqref{eq:cost_form1}, is generally 
not $\pi$ itself. 
However, if $G$ is irreducible, the POP $\q$ converges to the stationary distribution 
$\pi$ as $N\to\infty$. 
This is not a contradiction: $q$ and $\q$ minimize different objectives; the physical prior minimizes the 
single-step cost; $\q$ minimizes the accumulated MMC, which neglects the residual cost $\Cres$. For a NESS, 
$\Cres>0$ and accumulates linearly with $N$,  so the total EP grows without bound even as the minimal dissipation saturates. 

Nonetheless, $\MC_N^*$ remains a lower bound on the total 
EP at every $N$. In particular, its asymptotic value gives a 
physics-independent lower bound on the transient cost of transitioning from $p_0$ to $\pi$ as if it were equilibrium: any periodic process converging to a NESS $\pi$ must dissipate at least $D(p_0\|\pi)$ from the mismatch cost component of its EP, regardless of the 
physical implementation. When it is defined in a CTMC, this limiting quantity can be recognized as the total \emph{nonadiabatic entropy production} \cite{3flucthms_VDBESP_2010}. An intuitive argument outside of that context for this phenomenon can also be found in Appendix. \ref{ap:intuitive}. We will now show that this lower bound can be a significant portion of the total EP of an Ising model of spins.
\newline 

\emph{Continuous time limit---}
We now discuss the application of our results to continuous time-homogeneous processes. In a Continuous Time Markov Chain (CTMC), the map $G$ acting over an
interval of length $\tau$ arises as the exponential of a rate matrix
$K$, so $G=e^{K\tau}$. The periodic MMC framework above extends
naturally to this setting when $K$ is time-homogeneous: taking the period $\tau=T/N$ to zero while
holding the total elapsed time $T=N\tau$ fixed is simply the limit in
which the discrete-time process becomes the underlying continuous-time
evolution $p_t=e^{Kt}p_0$.

To take this limit, define the continuous-time KL contraction rate
\begin{align}\label{eq:DK}
    D_K(p\|q)
    &:= \left.-\frac{d}{ds}D(e^{sK}p\|e^{sK}q)\right|_{s=0}
\end{align}
which, like its discrete counterpart $\MC(p_0)$, is non-negative by the
data-processing inequality for KL divergence. This is then the \emph{instantaneous mismatch cost} \cite{premmc2021}, where $q$ is now interpreted as the minimizer of the time derivative of the cost function. The periodic MMC
$\MC_N(q)$ then becomes the continuous accumulated mismatch cost
\begin{equation}
    \MC_T(q):=\int_0^T D_K(p_t\|q)\,dt.
\end{equation}
For a time homoegeneous $K$, the $q$ is constant, and following the same argument as in the discrete case (see
Appendix~\ref{ap:continuous}), the periodic-optimal prior is now the
temporal average of the trajectory over the interval,
\begin{equation}
    \q = \frac{1}{T}\int_0^T p_t\,dt \in \argmin_{q\in\Delta_\mathcal X}\MC_T(q),
    \label{eq:qstar_cont}
\end{equation}
in exact analogy with \eqref{eq:qstar}. Details in Appendix \ref{ap:continuous}. Substituting \eqref{eq:qstar_cont} into $\MC_T(q)$, gives the continuous-time
analogue of \eqref{eq:mmc}:
\begin{equation}
    \MC^*_T := \MC_T(\q) = D(p_0||\q)-D(p_T||\q).
    \label{eq:mmc_cont}
\end{equation}
As in the discrete case, $\MC_T^*\geq0$ by the data-processing inequality for KL divergence, with equality only if $p_t$ is constant on $[0,T]$, i.e.\ $p_0$ is a fixed point of $K$. We have an analogous relation to \eqref{eq:cost_qbar} in terms of the periodic optimal prior:
\begin{equation}
    \mathcal{C}_T(p_0) = \MC_T(\q) + TD_K(\q\| q) + T\dot{\mathcal{C}}(q),
\end{equation}
where $\dot{\mathcal{C}}$ is the cost rate, which in the case of EP is the EP rate. The only assumption on the CTMC is time-homogeneity.

A similar result holds for a small class of time-heterogeneous processes where $K(t)=w(t)K$ for constant $K$. The periodic optimal prior is thus
\begin{equation}
    \q_w:=\frac{1}{\int_0^T w(t)dt}\int_0^T w(t)p_tdt,
\end{equation}
and the minimal dissipation is still given by \eqref{eq:mmc_cont} (see Appendix \ref{ap:timedep}). An example of this is a non-homogeneous Poisson process. 
\newline

\emph{Application to spin systems---}
We now apply our main results to the thermodynamics of spin systems. In the Ising model, on a lattice $\Lambda$, a state is specified by $\bs{\sigma}\in\mathcal{X}=\{-1,+1\}^{|\Lambda|}$, where the coordinates are thought to be spins or magnetic moments. The $|\Lambda|$ spins are assumed to interact with all other spins, often called an \emph{all-to-all} model or fully-connected Curie-Weiss model.

The dynamics of this system are specified through the Hamiltonian function on the configuration space:
\begin{equation}
    H(\bs{\sigma}):=-\frac{J}{2|\Lambda|}\sum_{i,j,\\j\neq i}\bs{\sigma}_i\bs{\sigma}_j
\end{equation}
Here, $J$ is a coefficient representing the (anti) ferromagnetic  interaction between two spins. Universal features of this model coupled to multiple thermal reservoirs have been substantially studied \cite{mamedespins2026, aguilera2023spins, martynecspins2020, tomephasetransitions}.
The energy change for transitioning from state $\bs{\sigma}$ to the same state but with the $i$'th spin flipped $\bs{\sigma}^{(i)}$ is given by
\begin{equation}
    \Delta_i H(\bs{\sigma}) 
    = H\left(\bs{\sigma}^{(i)}\right)-H(\bs{\sigma})
    = \frac{2\bs{\sigma}_i}{|\Lambda|}\left(J\sum_{j\neq i}\bs{\sigma}_j\right).
\end{equation}
We consider the only valid transitions to be ones between configurations differing by a single spin-flip.

We model the dynamics of this spin system directly as a time-homogeneous CTMC, rather than discretizing it into a sequence of periods. In the case of multiple reservoirs, at inverse temperatures $\beta_{\nu}$, every spin flip has multiple possible physical channels for flipping, corresponding to the number of reservoirs. Thus, the total observable transition rate matrix $K$ should be a sum over the transition rates for each bath \cite{vandenbroeck2015ensemble}: $K=\sum_\nu K^{(\nu)}$, and the state distribution evolves continuously as $p_t = e^{Kt}p_0$.

So long as the reservoirs have differing temperatures, we do not have detailed balance. Each individual reservoir, however, should satisfy local detailed balance (LDB), i.e., detailed balance for each reservoir-mediated channel. However, the full $K$ generally does not satisfy detailed balance with respect to any one of the reservoir temperatures. Thus, this is not an equilibrium system. The LDB constraint is stated as:
\begin{equation}
    \frac{K^{(\nu)}(\bs{\sigma}^{(i)}|\bs{\sigma})}{K^{(\nu)}(\bs{\sigma}|\bs{\sigma}^{(i)})} = e^{-\beta_\nu\Delta_i H(\bs{\sigma})},
\end{equation}
which is interpreted as each reservoir pushing the distribution towards the Boltzmann distribution for just that reservoir. The rates are chosen to be governed by the symmetric Arrhenius form \cite{mamedespins2026}:
\begin{equation}
    K^{(\nu)}(\bs{\sigma}^{(i)}|\bs{\sigma}) = \gamma_\nu e^{-\frac{\beta_\nu}{2} \Delta_i H(\bs{\sigma})},
\end{equation}
where $\gamma_\nu > 0$ are constants. This violation of global detailed balance implies that if there exists a stationary state, it will be a nonequilibrium steady state. A visualization of this setup is given in Fig.~\ref{fig:vis}. 
\begin{figure}
    \centering
    \includegraphics[width=1\linewidth]{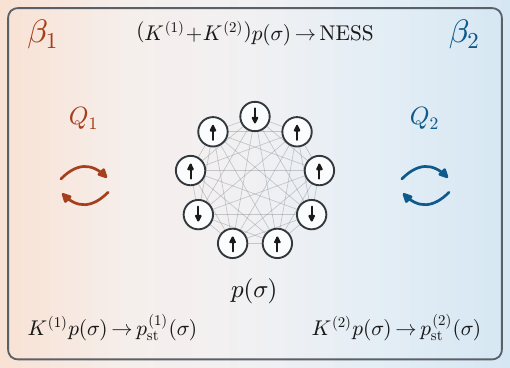}
    \caption{Visualization of spin system. Spins coupled to each bath at inverse temperatures $\beta_1,\beta_2$, and exchange heats $Q_1,Q_2$ over time. Each reservoir pushes probability distribution over configurations $p(\bs{\sigma})$ to its own Boltzmann distribution. Thus, in the stationary state for the combined dynamics summed over the reservoirs, there will still be nonzero current, giving a nonequilibrium steady state.}
    \label{fig:vis}
\end{figure}

Various discrete-time schemes exist for simulating the Ising model on a computer. Glauber dynamics \cite{glauber_time-dependent_1963} and Metropolis-Hastings \cite{hastings_monte_1970} algorithms are constructed a priori as `periodic' DTMCs; however, they are designed to model the stationary distribution (in most cases equilibrium) not the transient non-equilibrium relaxation. Since the minimal dissipation lower bound will saturate in the large-$T$ limit to $D(p_0\|\pi)$ whenever a fixed point exists, we are more interested in accurately capturing the transient regime as the system relaxes towards the NESS. We therefore work directly with the continuous-time result of Eqs.~\eqref{eq:qstar_cont}--\eqref{eq:mmc_cont}, built from the CTMC generator $K$ according to \cite{vandenbroeck2015ensemble}, rather than discretizing the evolution into periods of length $\tau$.

We now compare the minimal dissipation $\MC^*_T$ to the total entropy production of the process given by time-evolving an initial distribution $p_0$ under $p_t=e^{Kt}p_0$ over $[0,T]$. One can numerically compute the total EP of the process over the interval $[0,T]$ with Eq.~\eqref{eq:cost_form1} (in its continuous-time form), where $F(p_0)$ is the expected entropy flow into the reservoirs (see Appendix~\ref{ap:heat_flow}). We can then compare the minimal dissipation to the total EP of the process as a function of the elapsed time $T$. This indicates how much of the total EP is due to the logical dynamics of $K$ and $p_0$. 

Fig.~\ref{fig:placeholder} shows this comparison for $|\Lambda|=9$ spins with initial distribution of all spins up, connected to two reservoirs with inverse temperatures $\beta_1=3,\beta_2=1$, and coupling strength $J=0.2$. We also plot $D(p_0\|\pi)$ for $\pi$ the fixed point of $K$, or NESS of the system, as a horizontal line, which is the large-$T$ upper bound of the minimal dissipation. The single-step EP-optimal-prior MMC $\MC_T(q)$ (using the physical prior $q$ from Eq.~\eqref{eq:q_argmin}) is also plotted for comparison. To indicate the timescale of the transition to the NESS (the fixed point), the total variation distance $d_{\mathrm{TV}}(p_T, \pi)$ between the current distribution and the NESS is plotted.

\begin{figure}
    \centering
    \includegraphics[width=1\linewidth]{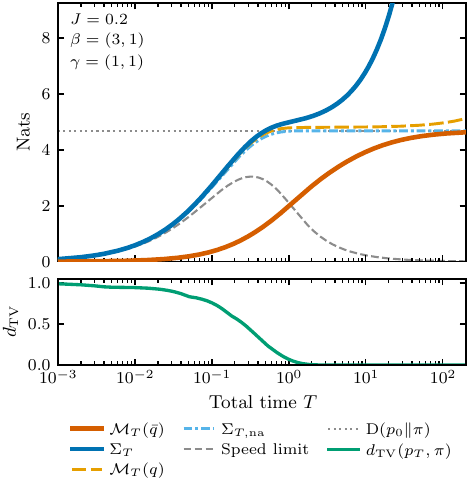}
    \caption{Minimal dissipation, MMC at the EP-optimal prior \eqref{eq:q_argmin}, total EP, and nonadiabatic EP as a function of time elapsed for the spin system starting at $p_0$. The long-term minimal dissipation $D(p_0||\pi)$ is plotted as a horizontal dotted line. The total variation distance between the current distribution and the NESS $d_{\mathrm{TV}}(p_t,\pi)$ is plotted in a separate plot in solid cyan. We see the minimal dissipation is a significant portion of the total EP, whereupon after reaching the fixed point (NESS), the minimal dissipation saturates and residual cost dominates --- growing the total EP without bound.}
    \label{fig:placeholder}
\end{figure}

We see in this scenario that the lower bound given by the minimal dissipation is a significant fraction of the total EP until the process reaches the NESS, whereupon the residual cost dominates, as previously discussed. This minimal dissipation depends only on the logical dynamics of $K$ and $p_0$, and for some intermediate range of $T$ accounts for over half of the total entropy production.
Since $K$ is irreducible and time-homogeneous, there is a unique fixed point, so we also compare to the non-adiabatic entropy production, the contribution to total EP related to how far the system is from its stationary state, in a well-defined manner \cite{3flucthms_VDBESP_2010}. We see the minimal dissipation fully accounts for all of the nonadiabatic EP in the long-$T$ limit. Specifically, under these assumptions, the nonadiabatic EP 
$\bs{\sigma}_{t,\mathrm{na}} = D(p_0\|\pi) - D(p_t\|\pi),$ which agrees with the asymptotic MMC value since $p_t\to\pi$ as $t\to\infty$ for this irreducible, time-homogeneous $K$. However, it is important to note that while the bounds agree asymptotically, nonadiabatic EP requires vastly more assumptions on the dynamics, whereas the minimal dissipation is valid outside of CTMCs entirely.

Another fundamental lower bound on entropy production in stochastic thermodynamics is given by the Speed Limit Theorems (SLTs) \cite{speedlimits_ito, speedlimits_saito, speedlimits_fisher}. The SLT lower bounds EP of a CTMC over an interval $[0,T]$ based on information-theoretic properties of the evolution of the distribution and the rates of the transition matrix. In particular, the most recent SLTs state the following lower bound on total EP:
\begin{equation}
    \Sigma_T \geq 2 W_1(p_0,p_T)\operatorname{arctanh}(W_1(p_0,p_T)/2T\langle\mu\rangle_T),
\end{equation}
where $W_1(\cdot,\cdot)$ is the 1-Wasserstein distance (see Appendix~\ref{ap:wass1}), and $2\langle \mu\rangle_T$ is the mean \emph{activity} of the rate matrix over the interval $[0,T]$. Here we use the arithmetic mean, so the interpretation of $2\mu$ is the dynamical activity \cite{speedlimits_ito}: roughly speaking, the total intensity of all state jumps or reactions. More detail on how activity is computed can be found in Appendix~\ref{ap:activity}. We can see in Fig.~\ref{fig:placeholder} the SLT plotted for the evolution over $[0,T]$. While the speed limit provides a tighter bound for small $T$, the minimal dissipation dominates for larger $T$, where the SLT provides a comparatively weaker lower bound. 

It is important to consider what the minimal dissipation, the total EP, and their ratio is across different parameter regimes and total times. In Fig. \ref{fig:ising_heatmap} we plot a heatmap displaying, for different times $T$, the ratio of minimal dissipation to the total entropy production $\MC_T/\Sigma_T$ as a function of the coupling strength $J\in[-0.5,0.5]$ and inverse temperature of the second bath $\beta_2\in[0,8.0]$, where the first bath is fixed at $\beta_1=3.0$. 

\begin{figure*}[t]
    \centering
    \includegraphics[width=1\linewidth]{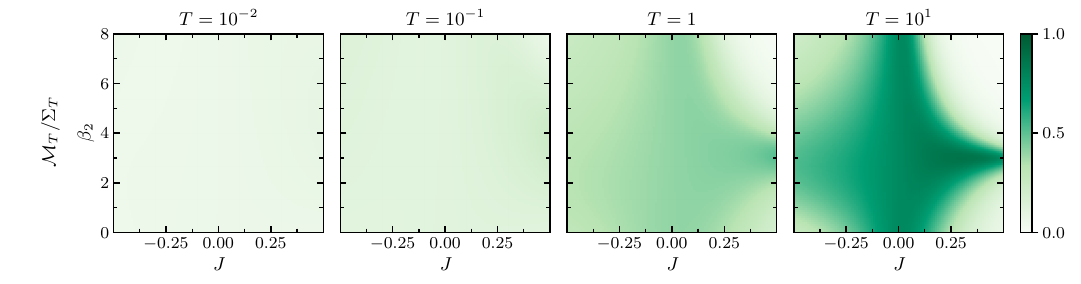}
    \caption{Heatmap displaying, for different times $T$, the ratio of minimal dissipation to total entropy production $\MC_T/\Sigma_T$ as a function of the coupling strength $J\in[-0.5,0.5]$ and inverse temperature of the second bath $\beta_2\in[0,8.0]$, where the first bath is fixed at $\beta_1=3.0$ as in the main text. For a large fraction of parameter space, and in particular the anti-ferromagnetic regime, the minimal dissipation is a significant fraction of the total EP.}
    \label{fig:ising_heatmap}
\end{figure*}

Indeed, the minimal dissipation is a non-negligible contribution to entropy production. We observe that firstly in several regions, particularly the antiferromagnetic regime $J<0$, the minimal dissipation is a significant fraction of total EP. In some other regions it approaches all of the total  EP. The sharp line at $\beta_2=3$ is where the system is effectively in contact with one reservoir and approaches equilibrium, where we know the minimal dissipation accounts for all of the entropy production (see \ref{ap:intuitive}). We have numerical evidence that this general pattern holds as the system size increases (see Supplementary Materials). 

Furthermore, recent work has both numerically and analytically demonstrated that mismatch cost can become an increasingly large fraction of the total entropy production as the system size grows \cite{yadav_comm_2025, yadav2026entropy}. In this particular work, note that the asymptotic limit of the minimal dissipation is $D(p_0||\pi)$, which for a system relaxing to equilibrium is the nonequilibrium free energy. Since free-energy differences are typically macroscopically relevant quantities, the minimal dissipation should likewise remain a non-negligible contribution to total entropy production in many settings.
\newline

\emph{Application to computation ---}
One clear application of our results is to \emph{uniform computers}. A uniform computer is a system with a fixed set of states and transitions that can process symbolic strings of any length, i.e., the computer does not change with input size. Examples include Deterministic Finite Automata (DFAs) and space-bounded Random Access Machines (RAM machines). The thermodynamics of these systems have  been studied at length \cite{Ouldridge_2023, 
gulce2024, yadav2026entropy}. 

Critically, these computers are periodic in the sense that they implement the same physical process at every clock cycle, described by a conditional map $G$ acting on the full state space, which includes the input, internal states, program counters, and memory. For any such uniform computer with a finite state space initialized in $p_0$ and coupled to a thermal bath, $\MC_N^*$ gives a physics-independent lower bound on the EP of $N$ steps of computation that depends only on the mappings between states --- not on the physical machine implementing it. 

To demonstrate this, we apply our results to Deterministic Finite Automata (DFAs), among the most fundamental models of computation and a canonical example of periodically operating systems. DFAs occupy the lowest level of the Chomsky hierarchy: an automaton consists of a finite set of states $R$ with transitions between them determined by a rule $\rho$, processing symbols from a finite alphabet one at a time, and accepting or rejecting input strings based on the final state reached. For a formal treatment, see Appendix \ref{ap:DFAs}. In what follows, states are represented by bold numbers ($\mathbf{0},\mathbf{1},\mathbf{2},\ldots$), not to be confused with the numbers in the binary alphabet. 

The dynamics of a DFA map naturally onto a DTMC with transition matrix $G$. We consider a DFA over the binary alphabet $A = \{0,1\}$, initialized in state $r^\varnothing\in R$ with certainty, processing symbols drawn i.i.d.\ with $\Pr(0) = p$ and $\Pr(1) = 1-p$. The deterministic transition function $\rho$, together with these input probabilities, induces a DTMC over the automaton's computational states. DFAs also have accept states; when a string reaches them, that means the string is in the language recognized by the DFA. Running this for $N$ iterations, the probability of being in one of the accept states is equal to the probability that the string that was fed is part of the language. 

We consider four DFAs that all recognize the same language: binary representations of integers divisible by 3, but whose state spaces are different. The minimal DFA, which we will call $\mathrm{D3}$, for this language, meaning it has a minimal number of states (3), is pictured in Fig. \ref{fig:DFAplot}. The initial state and accept state are both $\mathbf{0}$. 
\begin{figure*}
    \centering
    \includegraphics[width=\linewidth]{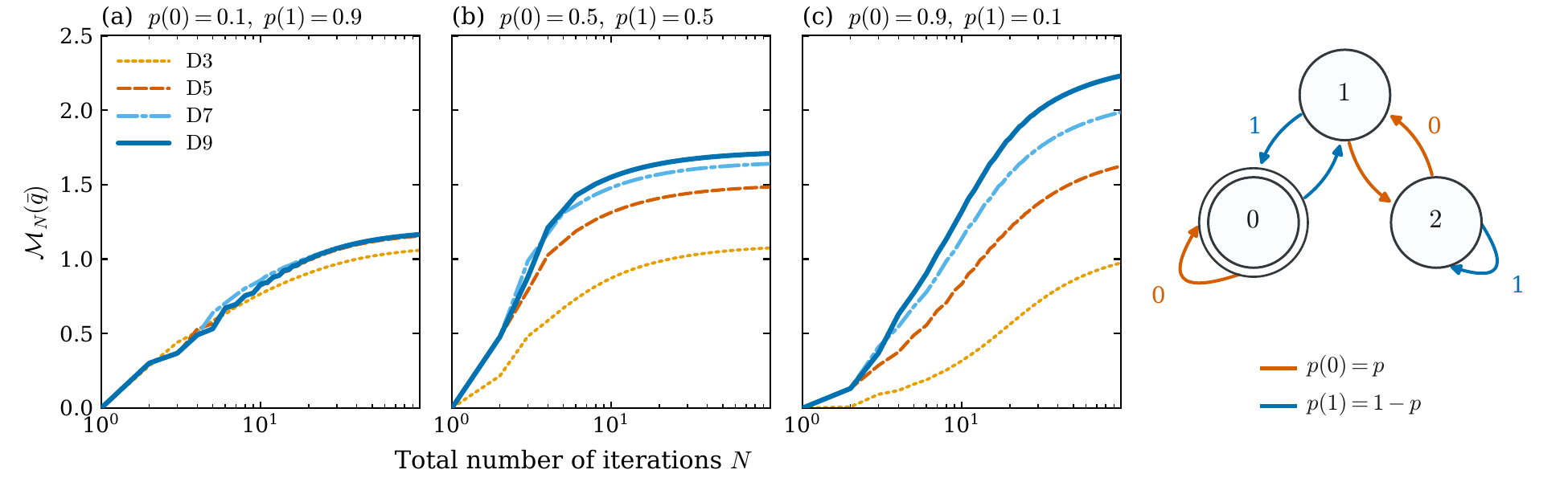}
    \caption{(Left) Minimal dissipation $\MC_N(\q)$ for four different DFAs that all recognize the same language, at different input distributions $(p,1-p)$. (Far right) Minimal DFA for recognizing binary representations of integers divisible by three. The initial state is $\mathbf{0}$ and the accept state is also $\mathbf{0}$. In the long time limit, the minimal dissipation scales monotonically with the number of redundant states, while in the short time limit there can be crossovers.}
    \label{fig:DFAplot}
\end{figure*}
We can also consider larger DFAs that accept the same language and will minimize to this three state version under Myhill-Nerode minimization \cite{Nerode1958}. This can be done in the following way: 1) take any states that have arrows to themselves; in this case, that would be $\mathbf{0},\mathbf{2}\in R$. Consider $\mathbf{0}$ as an example; it maps to itself upon reading a 0. Create another state, which we denote as $\mathbf{0}_a$, which is now mapped to when we are in $\mathbf{0}$ and read a $0$. Accordingly, if we read a $0$ in $\mathbf{0}_a$ we map to $\mathbf{0}$. Both these states will map to $\mathbf{1}$ upon reading a 1.
Similarly, do this for $\mathbf{2}$. Now we have created a larger equivalent DFA that has these 'cycle states,' but also many-to-one mappings between states, i.e., the DFA is now not invertible. Perform this two more times, each time adding more cycle states to $\mathbf{0}$ and $\mathbf{2}$. We then have three new DFAs: $\mathrm{D5, D7, D9}$ where the number after $\mathrm{D}$ corresponds to the total number of states. Note that while the initial states remain the same (i.e. $\mathbf{0}$), the set of accepting states becomes $\{\mathbf{0}, \mathbf{0}_a, \ldots\}$. We can now explore the minimal dissipation of these DFAs compared to the minimal DFA. 

Fig. \ref{fig:DFAplot} shows the minimal dissipation $\MC_N^*$ for each of these DFAs as a function of the total number of iterations $N$. 
We see that for large numbers of iterations, the minimal dissipation scales monotonically with the number of extra states. In this case of uniform input and input skewed towards 0s, the minimal DFA performs better for any number of iterations $N>1$. However, it is clear that it is not always the case that the minimal DFA has minimal dissipation, there can be crossovers as in the far left plot. This reflects results by Ouldridge and Wolpert \cite{Ouldridge_2023} which state that the minimal DFA need not minimize entropy production, though those results were dependent on the choice of a prior $q$. In the case of adding these redundant cycles, increasing the amount of many-to-one mappings, the minimal DFA does have the lowest minimal dissipation in the large $N$ limit. 

This suggests that the minimal dissipation can perhaps be used to classify thermodynamic costs of DFAs that recognize the same language. Additionally, it can give insight into what kind of redundancies can be added to a DFA that increase, or potentially decrease, the thermodynamic cost of operating it. Moreover, since this result now depends only on $G$ (since $p_0(x)$ is fixed to $\delta_{x,r^\varnothing}$) it is only dependent on the state transitions. In the long $N$ limit, if $G$ has a unique stationary distribution, this implies the long term minimal dissipation \eqref{eq:kl_p0_pi} is $D(\delta_{x,r^\varnothing}||\pi) = -\log(\pi(r^\varnothing))$, the value of the stationary state on the initial state. 

Motivated by this, we show that the long term minimal dissipation scales monotonically with the number of redundant states, but only of the initial state $\mathbf{0}$. Redundancy in any other state has no impact on the long term minimal dissipation.
Indeed, for any DFA with a unique fixed point, the asymptotic MMC bound decreases monotonically with coarseness of the DFA in its initial state. A proof is provided in Appendix. \ref{ap:dfa_proofs}. 
However, we note that this fact does not imply redundancy in other states will have no impact on the minimal dissipation for small $N$.

Finally, purely as a visualization, Fig. \ref{fig:simplex_div3} fixes a $p_0$ and shows the trajectory of the POP $\q$ on the 2-simplex in white as a function of $N$ for the $\mathrm{D3}$ DFA. The POP approaches the uniform stationary distribution of the map $G$, as expected. The color corresponds to the (log-scale) periodic MMC from choosing any other prior $q\neq \q$. We see the total MMC for all choices of $q$ increases with $N$, but priors near $\q$ incur less cost. We also see priors near the edge of the simplex incur the largest MMC. 
\newline

\begin{figure}
    \centering
    \includegraphics[width=1\linewidth]{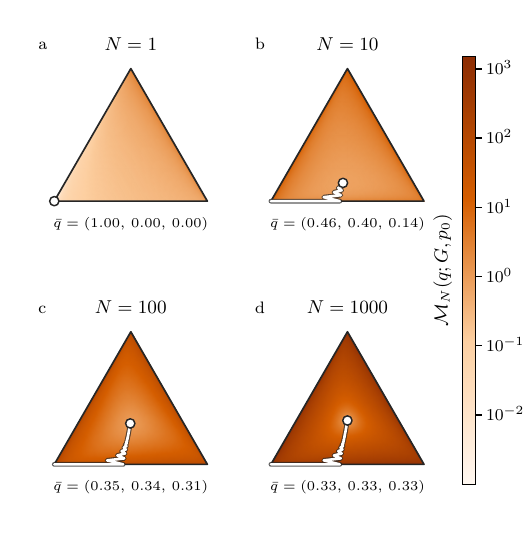}
    \caption{Heatmap of $\MC_N(q;p_0)$ over the simplex for various values of $N$ for the $\mathrm{D3}$ DFA. Here, $p=0.1$. The color coding shows the periodic MMC for choosing any other prior $q\neq \q$. The white lines represent the trajectory of the POP up until the $N$-th iteration, indicated by the white dot. The color scale indicates the periodic MMC using any prior other than the POP. Note at $N=1$, the POP is just $p_0$. }
    \label{fig:simplex_div3}
\end{figure}

\emph{Discussion---}
In this Letter, we derived a strictly positive lower bound on the periodic mismatch cost, and therefore on the total entropy production (and on any cost function of the form \eqref{eq:cost}), of any periodic process that evolves an initial distribution $p_0$ under repeated applications of a map $G$. This lower bound, the \emph{minimal dissipation}, is independent of the underlying physical implementation of the system, such as the expected heat flow to the reservoirs. It requires neither that the dynamics within each period arise from a CTMC nor that they satisfy (local) detailed balance, and it extends naturally to lower-bound the EP of any time-homogeneous continuous-time Markov process. It depends only on the stochastic map $G$, initial distribution, and the number of iterations. In this way, it constitutes a strict strengthening of the second law of thermodynamics for periodic processes.

To obtain this bound, we showed that the prior $q$ minimizing the periodic MMC --- and therefore encoding all of the physics relevant to the thermodynamic cost of the process --- is simply the time average of the trajectory, which we call the \emph{periodic-optimal prior} (POP). Substituting the POP into the periodic MMC formula shows that the minimal dissipation equals a drop in generalized JS divergence between the whole trajectory and that same trajectory shifted forward by one application of $G$. This result is unaffected by the presence of islands, and is strictly positive whenever $p_0$ is not a fixed point of $G$ and $G$ is not logically invertible.

When $G$ has a fixed point $\pi$, the POP converges to $\pi$ as $N\to\infty$, and the minimal dissipation converges to $D(p_0\|\pi)$. If $\pi$ is an equilibrium distribution, this limit is the nonequilibrium free energy: the maximum work extractable from a protocol starting in $p_0$. For an NESS $\pi$, the interpretation is less direct, though the limit can still be understood as the minimal dissipation the process would incur were $\pi$ an equilibrium distribution.

It is important to stress that this is a purely mathematical statement about the minimum value of the periodic MMC, and hence of the EP. Because the true EP-optimal prior depends on the physical implementation of the process, there is no guarantee that the POP is realizable by an engineer or organism implementing it. In particular, the POP need not have full support, whereas any prior arising from minimizing \eqref{eq:cost} must have full support unless $G$ is deterministic \cite{mmc_2017}; the bound can therefore fail to be saturated in practice.

We illustrated these results with a system of \emph{all-to-all} Ising spins coupled to two reservoirs at different temperatures, modeled as a CTMC. Across a range of coupling strengths and reservoir temperatures, the minimal dissipation is a non-negligible --- and in several regimes dominant --- fraction of the total EP; once the process reaches its fixed point (an NESS), the residual cost instead dominates, and the total EP grows without bound even as the minimal dissipation saturates. The minimal dissipation converges to the nonadiabatic EP in the long-time limit, despite requiring far fewer assumptions on the dynamics, and outside the short-time regime it gives a tighter bound than existing speed-limit theorems expressed via Wasserstein distance. This convergence, together with the formal similarity between the two quantities, suggests that the minimal dissipation is a more general form of the dissipation incurred whenever an initial distribution is evolved forward by a stochastic map. Numerical evidence suggests this overall picture persists as system size grows (see Supplementary Materials). This is consistent with the fact that the minimal dissipation asymptotes to $D(p_0\|\pi)$, which is the nonequilibrium free energy when $\pi$ is an equilibrium distribution, and which can be understood more generally as the minimal dissipation the process would incur were its fixed point $\pi$ an equilibrium state; either way, this is the kind of quantity that is typically relevant on the macroscopic scale.

We then turned to uniform computers such as DFAs, which apply the same physical process at every clock cycle. We showed that, after sufficiently many iterations, a minimal DFA has lower minimal dissipation than larger DFAs recognizing the same language but containing redundant states, and we proved that adding such redundant states monotonically increases the minimal dissipation in the long-time limit. At small $N$, however, this ordering need not hold, and crossovers between minimal and non-minimal DFAs can occur.

Several avenues for future work follow from these results. One could study a `minimax' problem: given the minimal dissipation, what is the worst-case initial distribution $p_0$ for a given process? The Ising system and DFAs considered here were chosen for their simplicity, to illustrate the magnitude and qualitative behavior of the bound; applying the framework to richer physical systems, e.g.\ with different connection topologies, would help establish how large a role the minimal dissipation plays in the total entropy production more broadly. Since the minimal DFA does not always minimize entropy production \cite{Ouldridge_2023}, it would also be worth characterizing which properties of a non-minimal DFA can lower its minimal dissipation.

Communication networks offer another natural computational setting in which to apply these results: encoders, decoders, and message-passing channels are periodic in the same sense as DFAs, and their stochastic thermodynamics has recently been studied using the periodic mismatch cost framework \cite{yadav_comm_2025, message_passing_2026}, albeit with an illustrative choice of prior. Our results remove the need for that choice, allowing the minimal periodic mismatch cost of such systems to be computed directly; since mismatch cost has also been formulated for continuous state spaces \cite{premmc2021}, this approach could eventually extend to optical communication channels as well.

Finally, our results are not restricted to periodic processes as such: they hold for any repeated physical process in which the same map $G$ is applied iteratively to a distribution, and the cost-minimizing prior is the same at every application. Examples include repeated interaction with a thermal bath and certain models of chemical reaction networks.
\newline

\section*{AI Disclosure}
Claude Opus was used to generate consistent coloring and style across figures, and to provide feedback on writing and mathematical arguments. 

\section*{Acknowledgements}
We acknowledge the support of CSH, especially Carrie Cowan, during the DIS masterclass on stochastic thermodynamics of computation. This work was supported by a research grant (VIL73405) from Villum Fonden.

\appendix

\section{Terminology}
Here, we introduce some terminology. In the following, $\mathcal{X}$ and $\mathcal{Y}$ denote finite state spaces with $\Delta_\mathcal{X}$ the space of distributions on $\mathcal{X}$. Consider a conditional distribution $G(y|x)$, specifying the probability of $y\in \mathcal{Y}$ given $x\in\mathcal{X}$. Given an initial distribution $p_0\in\Delta_\mathcal{X}$, we write $p_t:=G^tp_0$ as the distribution after $t$ applications of conditional distribution $G$.

\section{Information Theory Background}
Throughout this paper, several concepts from information theory are used. Two of the most important are the Kullback-Leibler Divergence and the (generalized) Jensen Shannon (JS) Divergence. Both of these quantities are ways of comparing probability distributions, though neither are metrics in the mathematical sense. The square root of the JS divergence between two distributions is a metric, however. 
\subsection{Kullback-Leibler Divergence}
The Kullback-Leibler (KL) Divergence (or relative entropy) between two probability distributions $p$ and $q$ over a sample space $\mathcal{X}$ is given by
\begin{equation}
    D(p||q) = \sum_{x\in \mathcal{X}} p(x)\log\frac{p(x)}{q(x)},
\end{equation}
where $0\log 0=0$ and $D(p||q)=0$ if and only if $p=q$. This measures cost of approximating $p$ with $q$. The KL divergence is not symmetric, nor does it satisfy the triangle inequality \cite{coverandthomas}. 
\subsection{Jensen-Shannon Divergence}
The Jensen Shannon divergence is a symmetrized KL divergence. Define the mixture distribution $m := \frac{p+q}{2}$. Then
\begin{equation}
    \mathrm{JS}(p||q) = \frac{1}{2}D(p||m)+\frac{1}{2}D(q||m).
\end{equation}
Interestingly, $\sqrt{JS(p||q)}$ is a metric in the mathematical sense. One can also define the generalized JS divergence to compare multiple probability distributions at once:
\begin{equation}
    \mathrm{JS}_{\alpha_1,\ldots,\alpha_n}(p_1, \ldots,p_n) = \sum_i \alpha_i D(p_i||m),
\end{equation}
where $m:= \sum_{i=1}^n\alpha_i p_i$ and $\alpha_i$ are weights that are selected for the probability distributions. The Jensen Shannon divergence appears a lot in nonequilibrium thermodynamics, an example being its relationship to the Thermodynamic Length \cite{thermolengthcrooks}. 

\section{Islands}\label{ap:islands}
In study \cite{kolchinsky_circuits}, authors noticed that in some cases of $G$ the distribution $q$ in \eqref{eq:q_argmin} that minimizes the thermodynamic cost is not unique. This motivates discussing an important mathematical structure called the \emph{islands} of $G$. For states $x,x'\in\mathcal{X}$ and $y\in\mathcal{Y}$, consider the relation
\begin{equation}
    x\sim x' \iff \exists y: G(y|x)>0, \quad G(y|x')>0. \label{eq:equivrelation}
\end{equation}
Broadly speaking, $x\sim x'$ if there is a non-zero probability of transitioning to the state $y$ from both $x$ and $x'$ under $G$. An \emph{island} of $G$ is defined to be any connected subset of $\mathcal{X}$ given by the transitive closure of this relation. The set of islands of $G$ form a partition of $\mathcal{X}$, which we denote as $L(G)$ \cite{kolchinsky_circuits}. We can also consider the islands of $G$ restricted to any subset of the state space $\mathcal Z \subseteq \mathcal {X}$. In this case, $L_\mathcal{Z}(G)$ denotes the partition of $\mathcal Z$ generated by the transitive closure of \eqref{eq:equivrelation} for $x,x'\in \mathcal Z$. 

For a distribution $p$ over $\mathcal{X}$, we use $\mathrm{supp}\; p:=\{x\in \mathcal X \mid p(x) >0\}$ to indicate the positive support of $p$ and write $p(\mathcal{Z}) = \sum_{x\in \mathcal{Z}} p(x)$ to mean the total probability in the subset $\mathcal Z$. 
As such, we construct the conditional probability of the state $x$ within an island $\ell\in L_\mathcal{Z}(G)$ as
\begin{equation}\label{eq:island_conditional}
    p(x|\ell) := 
    \begin{cases}
        \dfrac{p(x)}{p(\ell)} \quad &\text{if } x\in \ell \text{ and } \ell\subseteq\text{supp } p \\
        0 \quad &\text{otherwise}
    \end{cases}.
\end{equation}
As a shorthand, we will write $p^\ell$ for the transition matrix $p(\cdot|\ell)$. 
When there are multiple islands, the cost function \ref{eq:cost_form1}, and therefore the prior, decomposes into a linear combination over the islands, spelled out in full detail in Theorem \ref{thm:islands} in Appendix \ref{ap:mmc_islands} \cite{kolchinsky_circuits}. 

Intuitively, these islands are subsystems that are both logically and thermodynamically isolated from each other while implementing $G$. The impact of islands on MMC and entropy production has been studied in at length in computational systems \cite{Ouldridge_2023,kolchinsky_circuits}. In particular, islands can be used to study the mismatch cost associated with the \emph{modularity} of a computational system.
\newline

In the case where we have a collection of islands $\ell\in L_\Ze(G)$, the \emph{periodic-optimal prior} is 
given by
\begin{equation}\label{eq:qstarislands}
    \q_{L}(x) := \sum_\ell c_\ell \q(x|\ell) \in \argmin_{q\in\Delta_\mathcal{X}} \MC_N(q),
\end{equation}
where $c\in\Delta_{L_\Ze(G)}$ and $\q^\ell\equiv\q(\cdot|\ell)$ are the priors $\q$ conditional on each island $\ell$ given by \eqref{eq:island_conditional}. If $\mathcal{Z}$ is a set on which $\q$ has full support, then 
\begin{equation}
    \argmin_{q\in\Delta_\mathcal{X}}\MC_N(q) = \left\{\sum_{\ell\in L_\Ze(G)} c_\ell \q^\ell:c\in\Delta_{L_\Ze(G)} \right\},
\end{equation}
A proof is provided in Appendix \ref{ap:uniqueness}. We show that the minimal periodic mismatch cost in this case is identical to \eqref{eq:mmc_js}:
\begin{align}
    \MC_N^*
    &=S(p_N)-S(p_0)+N\!\left[S(\q)-S(G\q)\right]\\
    &=
    N[\operatorname{JS}(p_0,...,p_{N-1})
    -\operatorname{JS}(p_1,...,p_N)]
    \label{eq:mmc_js_islands}.
\end{align}
This demonstrates the minimal periodic MMC is independent of the number of islands, whereas the POP is not. The proof is in Appendix \ref{ap:weighted_mmc_bound}.

\section{Theorem on Mismatch Cost with islands}\label{ap:mmc_islands}
\begin{thm}\label{thm:islands}\cite{kolchinsky_circuits}
Consider any function $\mathcal{C}:\Delta_\mathcal{X} \to \mathbb R$ of the form $\Ce(p):=S(Gp)-S(p) + F(p),$ where $G(y|x)$ is some conditional distribution of $y\in \mathcal{Y}$ given $x\in \X$ and $F$ is a linear functional $F(p) = \sum_{x\in \X} p(x)f(x)$. Let $\Ze$ be any subset of $\X$ such that $f(x)<\infty$ for $x\in\Ze$ and let $q\in \Delta_\Ze$ be any distribution that obeys
\begin{equation}
    q^\ell \in \argmin_{r\,:\,\operatorname{supp} r\subseteq \ell} \Ce (r) \quad \forall \ell\in L_\Ze(G).
\end{equation}
Then each $q^\ell$ is unique and for any $p$ with $\operatorname{supp} p\subseteq \Ze$, 
\begin{equation}
    \Ce(p) = \MC(p) + \sum_{\ell\in L_\Ze(G)} p(\ell) \Ce(q^\ell),
\end{equation}
where the second term is the residual cost. In addition, we have
\begin{equation}
    \Ce(p) = \sum_{\ell\in L_\Ze(G)}p(\ell)\Ce(p^\ell),
\end{equation}
meaning the cost function decomposes into a weighted sum over the cost function of each island. Therefore, the mismatch cost does the same:
\begin{equation}\label{eq:mmc_island_decomp}
    \MC(q) = \sum_{\ell\in L_{\mathcal{Z}}(G)} p(\ell)\MC(q^\ell).
\end{equation}
\end{thm}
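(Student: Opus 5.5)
The plan is to derive everything from the fact that islands of $G$ have disjoint output supports. For $\ell\neq\ell'$, the sets $Y_\ell:=\{y:\exists x\in\ell,\ G(y|x)>0\}$ are disjoint, since a shared $y$ would force $x\sim x'$ across islands. Hence for any $p$ with $\operatorname{supp}p\subseteq\Ze$ we can write $p=\sum_\ell p(\ell)p^\ell$ and $Gp=\sum_\ell p(\ell)Gp^\ell$, where the $Gp^\ell$ have disjoint supports.

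The first step proves the island decomposition of $\Ce$ by the grouping rule for entropy. For a mixture of distributions with disjoint supports,
\[S(Gp)=H(\{p(\ell)\})+\sum_\ell p(\ell)S(Gp^\ell),\qquad S(p)=H(\{p(\ell)\})+\sum_\ell p(\ell)S(p^\ell).\]
Here $H(\{p(\ell)\})$ is the same Shannon entropy of the island weights in both expressions, because $(Gp)(Y_\ell)=p(\ell)$. The two entropy terms therefore cancel in the difference. Linearity gives $F(p)=\sum_\ell p(\ell)F(p^\ell)$, and together these yield $\Ce(p)=\sum_\ell p(\ell)\Ce(p^\ell)$.

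The second step handles each island separately. The aim is to show that $\Ce$ restricted to distributions supported in $\ell$ has a unique minimizer $q^\ell$, and that the one-island decomposition $\Ce(r)=D(r\|q^\ell)-D(Gr\|Gq^\ell)+\Ce(q^\ell)$ holds for every $r$ with $\operatorname{supp}r\subseteq\ell$. I would treat this as the cited single-island result \eqref{eq:cost} of \cite{mmc_2017,premmc2021,kolchinsky_circuits} applied with $\X$ replaced by $\ell$. The assumption $f<\infty$ on $\Ze$ keeps $\Ce$ finite there.

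Uniqueness is the expected main obstacle. Suppose $\Ce(r)=\Ce(q^\ell)$, so the mismatch cost $D(r\|q^\ell)-D(Gr\|Gq^\ell)$ vanishes. I would argue via the equality case of the data-processing inequality: equality forces $r(x)/q^\ell(x)$ to be constant on sets of states sharing an output $y$ with positive probability. Chaining these equalities along $\sim$ across the connected island then makes $r/q^\ell$ constant on all of $\ell$, so $r=q^\ell$. The delicate point is ensuring $q^\ell$ has full support on $\ell$ so that the ratio is defined. I would first establish full support by a perturbation argument: the derivative of $-S(r)$ diverges toward points where $r(x)=0$, while the derivative of $S(Gr)$ along such directions stays comparatively controlled.

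The third step assembles the result with $q:=\sum_\ell p(\ell)q^\ell$. Applying the per-island decomposition to each $p^\ell$ and summing with weights $p(\ell)$ gives
\[\Ce(p)=\sum_\ell p(\ell)\big[D(p^\ell\|q^\ell)-D(Gp^\ell\|Gq^\ell)\big]+\sum_\ell p(\ell)\Ce(q^\ell).\]
Disjointness of supports converts the bracketed sum into $D(p\|q)-D(Gp\|Gq)=\MC(p)$, because the weight log-ratios $\log\frac{p(\ell)}{p(\ell)}$ vanish in both the input and output divergences. This gives the residual-cost form of the decomposition. The same per-island identity, read term by term, is exactly the stated mismatch-cost decomposition \eqref{eq:mmc_island_decomp}.
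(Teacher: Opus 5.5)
The paper gives no proof of this theorem; it cites it from \cite{kolchinsky_circuits}. Your skeleton is the standard one: the grouping rule over the disjoint output sets $Y_\ell$, a per-island decomposition, DPI-equality chaining for uniqueness, then reassembly. Step 1 and the uniqueness chaining are fine once you know $q^\ell$ has full support on $\ell$.

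The genuine gap is that full-support step, and it matters more than you allow. It is needed for uniqueness, but also already to assert $\Ce(r)=D(r\|q^\ell)-D(Gr\|Gq^\ell)+\Ce(q^\ell)$ for every $r$ with $\operatorname{supp} r\subseteq\ell$. Without it, $r=\delta_x$ with $x\in\ell\setminus\operatorname{supp}q^\ell$ has $D(r\|q^\ell)=\infty$ while $\Ce(r)$ is finite, so the identity is meaningless. (The first-order argument behind \eqref{eq:cost} needs to perturb $q^\ell$ in both directions $\pm(r-q^\ell)$.)

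Your justification for full support also fails as stated. For $x\in\ell\setminus\operatorname{supp}q^\ell$ and $q_\eps=(1-\eps)q^\ell+\eps\delta_x$, a direct expansion gives
\[
\Ce(q_\eps)-\Ce(q^\ell)=(1-m_x)\,\eps\log\eps+O(\eps),\qquad m_x:=\sum_{y\notin\operatorname{supp}Gq^\ell}G(y|x),
\]
where the $O(\eps)$ uses $f(x)<\infty$. The $S(Gr)$ term is not ``comparatively controlled''. Outputs of $x$ outside $\operatorname{supp}Gq^\ell$ produce a compensating $-m_x\,\eps\log\eps$. If all outputs of $x$ lie there ($m_x=1$), the divergences cancel exactly and you get no contradiction.

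Your sketch never uses that $\ell$ is an island. It would therefore equally ``prove'' full support for a permutation $G$, where $S(Gr)=S(r)$, so $\Ce=F$ is linear and generically minimized at a vertex. The missing idea is connectedness. Since $\operatorname{supp}q^\ell$ is a nonempty proper subset of the connected set $\ell$, some $x\in\ell\setminus\operatorname{supp}q^\ell$ shares an output $y$ with some $x'\in\operatorname{supp}q^\ell$. Then $y\in\operatorname{supp}Gq^\ell$, so $m_x\le 1-G(y|x)<1$, and $\Ce(q_\eps)<\Ce(q^\ell)$ for small $\eps$. This step must come before the per-island decomposition, and it is where the island hypothesis is indispensable.

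A smaller point concerns Step 3. There you set $q:=\sum_\ell p(\ell)q^\ell$, which makes the prior depend on $p$. The theorem instead fixes $q\in\Delta_\Ze$ first, with arbitrary island masses $q(\ell)>0$, and asserts the identity for all $p$. The fix is one line. Because $(Gq)(Y_\ell)=q(\ell)$ and $(Gp)(Y_\ell)=p(\ell)$, both divergences split in the same way:
\[
D(p\|q)=\sum_\ell p(\ell)D(p^\ell\|q^\ell)+\sum_\ell p(\ell)\log\frac{p(\ell)}{q(\ell)},
\]
and likewise $D(Gp\|Gq)=\sum_\ell p(\ell)D(Gp^\ell\|Gq^\ell)+\sum_\ell p(\ell)\log\frac{p(\ell)}{q(\ell)}$. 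The island-mass term therefore cancels for every admissible $q$, not only when $q(\ell)=p(\ell)$. This independence from $q(\ell)$ is exactly what the paper relies on later in Appendix~\ref{ap:uniqueness}.
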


\noindent\textbf{Corollary.} The periodic MMC decomposes as
\begin{equation}
    \MC_N(q) = \sum_{\ell\in L_{\mathcal{Z}}(G)} \MC_N^\ell(q^\ell),
\end{equation}
where
\begin{equation}\label{eq:mmc_island_kl}
    \MC_N^\ell(q^\ell) = \sum_{t=0}^{N-1} p_t(\ell)
      \Bigl[D(p_t^\ell\|q^\ell) - D(Gp_t^\ell\|Gq^\ell)\Bigr].
\end{equation}

\section{General proof of weighted optimal prior}
\label{ap:weighted_pop}
\noindent\textbf{Proposition.} Assume that the mismatch cost is weighted by some non-negative time-dependent weight function $w_t$, so the periodic mismatch cost takes the form
\begin{equation}\label{eq:weighted_mmc_kl}
    \MC_{w,N}(q) = \sum_{t=0}^{N-1} w_t[D(p_t\|q)-D(Gp_t\|Gq)].
\end{equation}
We denote the value that $\MC_{w,N}$ takes at its minimum as 
\begin{equation}
    \MC^*_{w,N} := \min_{q\in\Delta_\mathcal{X}}\MC_{w,N}(q).
\end{equation}
Define the weighted trajectory-averaged distribution,
\begin{equation}\label{eq:weighted_pop}
    \q_w := \frac{1}{W} \sum_{t=0}^{N-1}w_tp_t,\quad\text{with } W:=\sum_{t=0}^{N-1} w_t>0.
\end{equation}
Then, this is a minimizer of the periodic mismatch cost:
\begin{equation}
    \q_w \in \argmin_{q\in\Delta_\mathcal{X}} \MC_{w,N}(q).
\end{equation}
Note that this proposition applies directly to a single-island mismatch cost and will subsequently be applied to each separate island.

\begin{proof}
Note that in general, for finite state spaces,
\begin{equation}
    D(p\|u)- D(p\|v) =p^\top\log\frac{v}{u}.
\end{equation}
Thus, grouping like terms in this manner, the mismatch cost difference simplifies to
\begin{align}
    \Delta\MC_{w,N}&:=\MC_{w,N}(q) - \MC_{w,N}(\q_w)\nonumber\\
    &= \sum_t w_t\left[p_t^\top\log\frac{\q_w}{q} - (Gp_t)^\top\log\frac{G\q_w}{Gq}\right]\nonumber\\
    &= W [D(\q_w\|q) - D(G\q_w\|Gq)],
\end{align}
where the last equality follows from using $\sum_tw_tp_t=W\q_w$ and linearity of $G$. Thus, by the data-processing inequality for KL divergence,
\begin{equation}
    \MC_{w,N}(q)\geq\MC_{w,N}(\q_w)
\end{equation}
for all $q\in\Delta_{\mathcal{X}}$. Therefore, $\q_w$ is a globally valid minimum.
\end{proof}

\noindent\textbf{Corollary.} Define the trajectory-averaged distribution:
\begin{equation}
    \q:=\frac{1}{N}\sum_{t=0}^{N-1}p_t.
\end{equation}
This is the global minimizer of the mismatch cost $\MC_N(q)$ over all possible priors $q\in\Delta_{\mathcal{X}}$ for a single island, which is true from the above proof with $w_t=1$.

\section{Uniqueness of prior}
\label{ap:uniqueness}
\noindent\textbf{Proposition.} Any prior $q\in\Delta_\mathcal{X}$ whose conditional distribution on island $\ell\in L_{\mathcal{Z}}(G)$ is exactly $q(x|\ell) = \q(x|\ell)$ is a minimizer of the periodic MMC:
\begin{equation}\label{eq:mmc_island_minimizing_set}
    \left\{\sum_{\ell\in L_\Ze(G)} c_\ell \q^\ell:c\in\Delta_{L_\Ze(G)} \right\}\subseteq\argmin_{q\in\Delta_\mathcal{X}}\MC_N(q).
\end{equation}
In addition, if $\q$ has full support over $\mathcal{Z}$, then the converse holds over $\Delta_\mathcal{Z}$: the prior $q\in\Delta_\mathcal{Z}$ which minimizes $\MC_N$ is unique up to a piecewise constant function on islands. In particular,
if $\mathcal{Z}=\mathcal{X}$, then this characterizes all global minimizers.

\begin{proof}
Let $X$ and $Y$ be random variables, with $X\to Y$ governed by forward process $G(y|x)$. By Theorem~\ref{thm:islands}, the MMC decomposes over islands $\ell \in L_{\mathcal{Z}}(G)$ as
\begin{equation}
    \MC_N(q) = \sum_{\ell\in L_{\mathcal{Z}}(G)} \MC_N^\ell(q^\ell).
\end{equation}
Appendix~\ref{ap:weighted_pop} applies to the $\ell$-th term with weight function $w_t=p_t(\ell)$ and distributions $p_t\mapsto p_t^\ell$ and $q\mapsto q^\ell$. Thus, a minimizer of each $\MC_N^\ell$ exists and is exactly the conditional distribution of $\q$ on $\ell$:
\begin{equation}
    \q^\ell = \frac{1}{W_\ell}\sum_t p_t(\ell)p_t^\ell = \frac{\q(x)}{\q(\ell)}\mathbbm{1}_{\ell \cap \text{supp }\q}.
\end{equation}
Since $\q^\ell$ is a minimizer, we know that 
\begin{equation}\label{eq:mmc_diff_islandwise}
    \MC_N^\ell(q^\ell) - \MC_N^\ell(\q^\ell) \geq 0
\end{equation}
for all $q\in\Delta_\mathcal{X}$. From the above decomposition, the value of $\MC_N(q)$ depends only on the conditionals $q^\ell$ and not on the masses $q(\ell)$ assigned to each island. Thus, any distribution $q$ with conditional distribution such that $q^\ell = \q^\ell$ on each island equals is a minimizer of the total MMC:
\begin{equation}
    \MC_N(q) = \sum_\ell \MC_N^\ell(\q^\ell) = \MC_N^*,
\end{equation}
proving the inclusion \eqref{eq:mmc_island_minimizing_set}, where $c_\ell=q(\ell)$.

We now prove the converse under the full-support assumption and assuming the optimization domain is over $\Delta_\mathcal{Z}$. From Appendix. \ref{ap:weighted_pop}, any minimizer $q^\ell$ of $\MC_N^\ell$ satisfies
\begin{align}\label{eq:min_kl_diff}
    0 
    &= D(\q^\ell\|q^\ell) - D(G\q^\ell\|Gq^\ell) \\
    &= \sum_y (G\q^\ell)_y D(\bar{Q}_{X|y}\|Q_{X|y}),
\end{align}
where the last equality is the chain-rule for KL divergence, with $\bar{Q}_{X|Y}$ and $Q_{X|Y}$ being the backward processes for $G$ of $\q^\ell$ and $q^\ell$, respectively. 
We posit that all minimizers are equivalent under the relation that they have the same Bayesian inverse of $G$.
Thus, the equality \eqref{eq:min_kl_diff} holds if, for all $y$ on the support of $G\q^\ell$, the backward processes, or Bayesian inverses, are equivalent: $\bar{Q}_{X|Y=y}=Q_{X|Y=y}$. A single matrix should lead to the perfect retrodiction in time of both $q^\ell$ and $\q^\ell$. Thus, from Bayes' rule,
\begin{equation}
    \bar{Q}(x|y) 
    = \frac{G(y|x)\q^\ell(x)}{(G\q^\ell)_y},
\end{equation}
we have equality if
\begin{equation}
    \frac{q^\ell(x)}{\q^\ell(x)} = \frac{(Gq^\ell)_y}{(G\q^\ell)_y}.
\end{equation}
Now, since the right hand side consists of marginal distributions of $y$, if we have two inputs $x$ and $x'$ that can give the same output $y$, so $G(y|x)>0$ and $G(y|x')>0$, applying the equation transitively gives
\begin{equation}
    \frac{q^\ell(x)}{\q^\ell(x)}=\frac{q^\ell(x')}{\q^\ell(x')}.
\end{equation}
Thus, taking the transitive closure, we get that on each island $\ell$, any other minimizer $q^\ell$ must satisfy
\begin{equation}
    q^\ell(x) = r_\ell \q^\ell(x).
\end{equation}
Both are normalized on $\ell$, so $r_\ell=1$. Hence, 
\begin{equation}
    q^\ell(x) = \q^\ell(x).
\end{equation}
The inequality \eqref{eq:mmc_diff_islandwise} together with positivity of the MMC implies that equality of the sum with zero requires equality on each island.
Therefore, inclusion within the minimizing set of distributions shown in \eqref{eq:mmc_island_minimizing_set} is actually an equality, as desired.
\end{proof}

\noindent\textbf{Corollary.} If there is only a single island and $\q$ has full support upon it, it is clear that $\q$ is the unique global minimizer.

\section{Weighted periodic MMC bound}
\label{ap:weighted_mmc_bound}
Here, we prove two general equalities of the weighted MMC evaluated at the weighted periodic optimal prior $\q_w$ of equation \eqref{eq:weighted_pop}.

\begin{proof}
Recall that the weighted MMC can be written as a sum of KL divergences as in \eqref{eq:weighted_mmc_kl}. Evaluating at $\q_w$ and expanding the KL-divergence into the form $D(u \| v) = -S(u) - u^\top \log v$, we split the objective into entropy terms and cross-entropy terms:
\begin{align}
    \MC_{w,N}^*
    &= \sum_t w_t[S(Gp_t)-S(p_t)] \nonumber\\
    &+ \sum_t w_t[(Gp_t)^\top\log G\q_w - p_t^\top\log \q_w].
\end{align}
Then, by linearity and recalling the definition of $\q$, we have that the second sum reduces to a difference of entropies:
\begin{align}\label{eq:mmc_weighted_entropy_diff}
    \MC_{w,N}^*
    &= \sum_t w_t[S(Gp_t)-S(p_t)] \nonumber\\
    &+ W[S(\q_w) - S(G\q_w)].
\end{align}

Now, note that the Jensen–Shannon divergence for distributions $p_0,\ldots, p_{N-1}$ with weights $\alpha_0,\dots, \alpha_{N-1}$ is defined as 
\begin{equation}\label{eq:jsdivdefn}
    \operatorname{JS}_\alpha((p_t)_{t=0}^{N-1}) = \sum_{t=0}^{N-1} \alpha_t D\left(p_t\middle\|\sum_{s=0}^{N-1} \alpha_s p_s\right).
\end{equation}
Thus, identifying $\alpha_t = w_t/W$ and using the definition of the periodic-optimal prior $\q$, we get that 
\begin{align}\label{eq:mmc_weighted_JS}
    &\MC_{w,N}^*
    = W[\operatorname{JS}_\alpha((p_t)_{t=0}^{N-1}) - \operatorname{JS}_\alpha((p_{t+1})_{t=0}^{N-1})],
\end{align}
so we are done. 
\end{proof}
This result can be derived in a separate context by considering MMC of a periodic processes parameterized by $t$ as there being uncertainty about the initial distribution $p_0$ \cite{tasnim2023stochastic}. However, it does not find the particular minimizer, nor does it derive any extensions or provide interpretations. 
\newline

\noindent\textbf{Corollary.} The periodic MMC with $w_n=1$ obeys the equations \eqref{eq:mmc} and \eqref{eq:mmc_js} at its minimal value $\MC_N^*:=\MC_N(\q)$. Indeed, this is true regardless of the number of islands.
\begin{proof}
We have already shown in Appendix. \ref{ap:uniqueness} that any distribution satisfying $q^\ell =\q^\ell$ is indeed the optimal prior for each $\MC_N^\ell$ and thus $\MC_N$. In particular, the trajectory averaged POP
$\q$ trivially satisfies this and is a minimizer for any number of islands, so 
\begin{equation}
    \MC_N^* = \MC_N(\q)
\end{equation}
always. Since $w_n=1$, we have that $W=N$. Also, the entropies in the first term of \ref{eq:mmc_weighted_entropy_diff} telescope, which gives
\begin{equation}
    \MC_N^* = [S(p_N)-S(p_0)] + N[S(\q) - S(G\q)].
\end{equation}
Furthermore, we have from $\alpha_n=1/N$ that this is also equal to the following difference of Jensen-Shannon divergences for equally-weighted distributions:
\begin{equation}
    \MC_N^* = N[\operatorname{JS}((p_n)_{n=0}^{N-1}) - \operatorname{JS}((p_{n+1})_{n=0}^{N-1})].
\end{equation}
Thus, these two equalities hold for the MMC at $\q$ independently of the island-structure.
\end{proof}

\section{Asymptotic behavior}\label{ap:asymp}
\noindent\textbf{Proposition.} Let $p_0\in\Delta_\mathcal{X}$ be an initial distribution and $G$ be an irreducible, but not necessarily aperiodic, finite stochastic matrix. Then, in the long $N$ limit, the trajectory-averaged optimal prior $\q\to\pi$, where $\pi$ is the unique stationary distribution of $G$.

\begin{proof}
Let $G$ be as in the proposition statement, so we have a unique stationary distribution $\pi$ such that $G\pi=\pi$. Notice that the following sum telescopes:
\begin{equation}
    (\mathbb{I}-G)\q
    =\frac{1}{N}\sum_{n=0}^{N-1} (p_n-Gp_n)=\frac{1}{N}(p_0-p_N).
\end{equation}
Thus, since $G$ is stochastic and $p_0\in\Delta_\mathcal{X}$,
\begin{equation}
    \|(\mathbb{I}-G)\q\|_1\leq\frac{1}{N}(\|p_0\|_1+\|p_N\|_1)=\frac{2}{N}.
\end{equation}
Therefore,
\begin{equation}
    \lim_{N\to\infty}\|(\mathbb{I}-G)\q\|_1=0.
\end{equation}
The sequence $(\q)$ lies in the compact probability simplex, so we may take a convergent subsequence $\q_{N_k}\to \q_\infty$, and, by continuity, we have
\begin{equation}
    (\mathbb{I}-G)\q_\infty=\lim_{k\to\infty}(\mathbb{I}-G)\q_{N_k}=0.
\end{equation}
Irreducibility implies that $\mathrm{ker}(\mathbb{I}-G)=\mathrm{span}\{\pi\}$, so since $\|\q_\infty\|_1=1$, we have $\q_\infty=\pi$. Thus, in the long-$N$ limit, we recover the stationary state as the optimal prior $\q_\infty$. The rate of convergence therefore goes as $O(N^{-1})$. 
\end{proof}

\noindent\textbf{Proposition.}
Assume now that $G$ is additionally aperiodic. The asymptotic behavior of the MMC is such that
\begin{equation}
    \lim_{N\to\infty}\MC_N(\q)=D(p_0\|\pi).
\end{equation}
In addition, the convergence rate is $O(N^{-1})$

\begin{proof}
As a shorthand, denote the mean distribution over $N$ iterations as $\q$ and let $\eps=\frac1N(p_N-p_0)$, so $G\q=\q+\eps$. 
A first-order Taylor expansion of $S(\q)$ about $G\q$ gives
\begin{equation}
    S(\q)-S(\q+\eps)
    =
    -\nabla S(\q+\eps)^\top\eps
    +O(\|\eps\|^2).
\end{equation}
Using $\nabla S(p) = - (1 + \log p)$,
\begin{equation}
    S(\q) - S(\q+\eps)=\eps^\top(1+\log (\q+\eps))
    +O(\|\eps\|^2).
\end{equation}
However, since both $p_0$ and $p_N$ are normalized, $\eps^\top\mathbf{1}=0$ independent of $N$. Also, we may also use the first-order expansion $\log(\q+\eps)=\log \q+\eps/\q +O(\eps^2)$ to get 
\begin{equation}
    S(\q) - S(\q+\eps) = \eps^\top\log \q + O(\|\eps\|^2).
\end{equation}
Multiplying by $N$ and substituting into \eqref{eq:mmc}, we get
\begin{align}
    \MC_N^*
    &=S(p_N)-S(p_0) +(p_N-p_0)^\top\log \q + O(N\|\eps\|^2)\nonumber\\
    &=D(p_0\|\q)-D(p_N\|\q) + O(N^{-1}),
\end{align}
where we used $\|\eps\|=\frac{1}{N}\|p_N-p_0\| = O(N^{-1})$.
By the preceding proposition, $\q\to \pi$. Also, aperiodicity of $G$ gives that the sequence of distributions $p_N\to\pi$. Now, since irreducibility implies that the stationary distribution $\pi$ has full support, the KL divergence is continuous in a neighborhood of $\pi$. Consequently, we have both
\begin{equation}
    D(p_0\|\q)\to D(p_0\|\pi)
\end{equation}
and
\begin{equation}
    D(p_N\|\q)\to D(\pi\|\pi)=0.
\end{equation}
The $O(N^{-1})$ remainder also vanishes, and so
\begin{equation}
    \MC_\infty^*
    := \lim_{N\to\infty}\MC_N^*
    = D(p_0\|\pi),
\end{equation}
completing the first part of the proof. 

To establish convergence, we note that the Perron–Frobenius theorem implies that $G$ has a single eigenvalue of magnitude $1$. Now, note that 
\begin{equation}
    \pi\mathbf{ 1}^\top(p_0-\pi) = 0,
\end{equation}
so from the triangle inequality,
\begin{align}
    \|\q-\pi\|
    &\leq  \frac{1}{N}\sum_{n=0}^{N-1}\|G^np_0-\pi\|\nonumber\\
    &= \frac{1}{N}\sum_{n=0}^{N-1}\|(G^n - \pi\mathbf{1}^\top)(p_0-\pi)\|\nonumber\\
    &\leq  \frac{\|p_0-\pi\|}{N}\sum_{n=0}^{N-1}\|G^n - \pi\mathbf{1}^\top\|_{\mathrm{op}}.
\end{align}
Here, $\pi\mathbf{1}^\top$ picks out the subspace of eigenvalue 1 from $G$, so let $\lambda$ be the largest eigenvalue of $G$ such that $|\lambda|<1$. Then,
\begin{equation}
    \|\q-\pi\|\leq \frac{C}{N} \sum_{n=0}^{N-1}|\lambda|^n \leq \frac{C}{N(1-|\lambda|)}=O(N^{-1}),
\end{equation}
so convergence is upper-bounded by $O(N^{-1})$.
\end{proof}

\subsection{Upper bound on minimal dissipation}\label{ap:upper}
\noindent\textbf{Proposition.} The minimal dissipation $\MC_N^*$ is upper bounded in terms of the state space size $d$ and number of iterations $N$:
\begin{equation}
\MC_N^* \leq 2\log d + \log N + 1.
\end{equation}

\begin{proof}
Consider the Fannes--Audenaert inequality \cite{Audenaert_2007} for two probability distributions $p,q$ of dimension $d$:
\begin{equation}
|S(p)-S(q)| \leq T\log(d-1)+H_b(T),
\end{equation}
where $T=\frac{1}{2}|p-q|_1$, and $H_b(T)$ is the binary entropy of $(T,1-T)$.
Recall
\begin{equation}
\MC_N^* = S(p_N)-S(p_0)+N[S(\q)-S(G\q)].
\end{equation}
From \ref{ap:asymp},
\begin{equation}
|\q-G\q|_1
=\frac{1}{N}|p_N-p_0|_1
\leq\frac{2}{N}.
\end{equation}
Thus $T\leq 1/N$. For $N\geq 2$, $H_b(T)$ is monotonically increasing over the relevant range, giving
\begin{equation}
|S(\q)-S(G\q)|
\leq \frac{1}{N}\log(d-1)+H_b(1/N).
\end{equation}
Using $\log(d-1)\leq\log d$ and the bound
\begin{equation}
H_b(x)\leq -x\log x+x,
\end{equation}
we obtain
\begin{equation}
H_b(1/N)\leq\frac{\log N+1}{N}.
\end{equation}
Therefore,
\begin{equation}
N|S(\q)-S(G\q)|
\leq \log d+\log N+1.
\end{equation}
Finally, since $S(p_N)\leq\log d$ and $S(p_0)\geq0$,
\begin{equation}
S(p_N)-S(p_0)\leq\log d.
\end{equation}
Combining these inequalities gives
\begin{equation}
\MC_N^*\leq2\log d+\log N+1.
\end{equation}
\end{proof}

\subsection{Why the lower bound goes to $D(p_0||\pi)$ in large $N$ limit}\label{ap:intuitive}
A main result of this letter is that if $G$ is irreducible and aperiodic, and therefore has a unique stationary distribution $\pi$, then in 
the large $N$ limit, the minimal periodic mismatch cost is $D(p_0||\pi)$.

To understand this result intuitively for $\pi$ being equilibrium, consider a CTMC with time-homogeneous
rate matrix $K$ modeling thermal relaxation. Then for any choice $\tau$ we 
have $p(t+\tau) = e^{K\tau}p(t)$. Therefore we have a periodic process with 
$G=e^{K\tau}$ and stroboscopic distributions $p_n:=p(n\tau)$. We know the actual prior for thermal relaxation is $\pi$, the equilibrium. The one-step MMC for thermal relaxation is $D(p_0||\pi)$, which is all of the entropy production since residual cost is zero at equilibrium. Additionally, in \cite{yadav2026entropy} authors show that fine graining in time cannot decrease the MMC. Thus, the periodic MMC under the map $G$ will equal the one-step MMC. Now, the periodic MMC with the $\pi$ is 
\begin{align}
    \MC_N(\pi) &= \sum_{n=0}^{N-1} D(p_n||\pi) - D(Gp_n||\pi)\\ 
    &= D(p_0||\pi)-D(p_N||\pi)\geq 0 
\end{align}
by the data processing inequality for KL divergence.
Taking the limit as $N\to \infty$ we get $D(p_0||\pi)$. Now consider any 
other prior $q\neq \pi$. Since $G^n p_0 \to \pi$ 
asymptotically, each summand approaches $D(\pi||q) - D(\pi||Gq)$ as 
$n \to \infty$. 
By the data processing inequality for KL divergence applied to $G\pi = \pi$, 
this limit is strictly positive for any $q \neq \pi$. Since the terms of the 
sum do not vanish, the series diverges, giving, on $\Delta_\mathcal{X}\setminus\{\pi\}$
\begin{equation}
    \lim_{N\to\infty}\MC_N = +\infty,
\end{equation}
whereas using $\pi$ you get a finite answer. Therefore no prior other than 
$\pi$ can do better and the minimal periodic mismatch cost in the 
$N\to\infty$ limit is $D(p_0||\pi)$.

To understand this for an NESS. The argument above does not work directly because maps $G$ that model an NESS have nonzero probability currents at the fixed point, which differs from the $G$'s mentioned above. Note, though, that we can always find a Hamiltonian $H$ for which the fixed point $\pi$ which was originally an NESS is equilibrium. This will certainly change the rate matrix $K$, and the  entropy flow $f(x)$ from \eqref{eq:cost}, but the lower bound does not depend on any of these. Therefore it will be a lower bound the EP as if $\pi$ was equilibrium, even though it is an NESS.

\section{Spin system details}\label{ap:spins}
\subsection{Entropy flow for spin systems}\label{ap:heat_flow}
Let $\bs{\sigma}\to \bs{\sigma}^{(i)}$ be an Ising flip of spin $i$ mediated by bath $\nu$. Since the Hamiltonian is time-independent and there is no work performed during the flip, the first law gives 
\begin{equation}
    \Delta E_i(\bs{\sigma}) = H(\bs{\sigma}^{(i)}) - H(\bs{\sigma}) = q_i^{(\nu)}(\bs{\sigma}),
\end{equation}
where $q_i^{(\nu)}(\bs{\sigma})$ is the heat absorbed by the system from bath $\nu$ during this flip. Ensemble averaging, note that
\begin{equation}
    K^{(\nu)}(\bs{\sigma}^{(i)}|\bs{\sigma}) p_t(\bs{\sigma})
\end{equation}
is the probability per unit time of observing the transition $\bs{\sigma}\to \bs{\sigma}^{(i)}$ through reservoir $\nu$. Here, $p_t=e^{Kt}p_0$, where $K=\sum_\nu K^{(\nu)}$. Such a transition carries energy $\Delta E_i(\bs{\sigma})$, so define the state-dependent heat rate for all possible states to be 
\begin{equation}
    g^{(\nu)}(\bs{\sigma}):= \sum_i K^{(\nu)}(\bs{\sigma}^{(i)}|\bs{\sigma}) \Delta E_i(\bs{\sigma})
\end{equation}
since we only consider single-spin flips to be valid transitions. Then, the expected heat current into the system from reservoir $\nu$ is the sum over all possible states 
\begin{equation}
    \dot Q_\nu (t) = \sum_{\bs{\sigma}\in\mathcal{X}}g^{(\nu)}(\bs{\sigma})p_t(\bs{\sigma}) = \left(e^{K^\top t} g^{(\nu)} \right)^\top p_0.
\end{equation}
Accordingly, we have the total expected heat absorbed to be
\begin{align}
    Q_\nu(p_0) 
&= \int_{[0,T]} \dot Q_\nu(t) \, dt = \int_{[0,T]} \left(e^{K^\top t} g^{(\nu)}\right)^\top p_0 \, dt,  \nonumber\\
&=\left[\int_{[0,T]} e^{K^\top t} g^{(\nu)} \, dt\right]^\top p_0,
\end{align}
which is a linear functional of $p_0$ that we may calculate entirely deterministically once we know $K^{(\nu)}$ and $H$. Then, for initial distribution $p_0$, the expected entropy flow to the environment $\Phi$ is the linear functional
\begin{equation}
    \Phi(p_0) = -\sum_\nu \beta_\nu Q_\nu(p_0).
\end{equation}
Using  in place of $F(p_0)$ in equation \eqref{eq:cost_form1} gives the total entropy production over time interval $[0,T]$.

\subsection{Activity Computation}
\label{ap:activity}
The activity can be considered to be the kinetic measure of bidirectional transitions. Denote the set of all transition edges as $\mathcal{E}$. For an edge $e=(i\to j)$ connecting states $i$ and $j$, define one-way fluxes as 
\begin{equation}
    J_e^+:=K_{ji}p_i, \quad J_e^-:=K_{ij}p_j.
\end{equation}
Choosing the arithmetic mean $\mu_{A,e}$ as convention for measuring the bidirectional fluxes along edge $e$, the activity is defined to be
\begin{equation}
    \mathcal{A}(t) = 2\mu_A(t) = 2\sum_{e\in\mathcal{E}}\mu_{A, e}(t),
\end{equation}
which is equivalent to 
\begin{equation}
    \mathcal{A}(t) = \sum_{i\neq j} K_{ij}(t)p_j(t) = -\sum_i K_{ii}p_i(t).
\end{equation}
Then, the activity over a time interval $[0,T]$ is 
\begin{equation}
    \mathcal{A}_T = \int_{[0,T]}2\mu_A(t)dt.
\end{equation}

\subsection{Wasserstein distance calculation}
\label{ap:wass1}
For two distributions $p$ and $q$, their 1-Wasserstein distance is defined to be  \cite{speedlimits_ito}
\begin{equation}
    W_1(p, q) = \operatorname*{inf}_J\sum_{e\in\mathcal{E}}|J_e|,
\end{equation}
where $J_e:=J_{ji} - J_{ij}$ is the net current over edge $e=(i\to j)$, and where the infimum is over all net edge currents $J$ satisfying $\mathrm{div} J = p - q$, i.e., finding the currents $J_e$ to place on the edges of the graph such that the divergence gives the net current from $p$ to $q$. We find this by solving the dual problem of finding 
\begin{equation}
    \max_f\langle f, p-q\rangle
\end{equation}
subject to the Lipschitz-continuity constraint 
\begin{equation}
    |f_i - f_j|\leq d(i,j)
\end{equation}
on every edge $(i, j)$, where $d(i,j)$ is the length of said edge, which in the case of the spin system under consideration is always 1 (the Hamming distance).

\section{Continuous-time MMC}\label{ap:continuous}
Consider a continuous process with a time-homogeneous master equation
\begin{equation}
    \frac{d}{dt} p(t) = Kp(t),\quad p(0)=p_0,
\end{equation}
where $K$ is a time-independent rate matrix. Thus, an initial distribution $p_0$ evolves according to
\begin{equation}
    p_t = e^{Kt}p_0.
\end{equation}
Therefore, to relate this to the discrete-time mismatch cost, define for discrete time intervals $[0,\eps]$ the transition matrix $G_\eps:=e^{\eps K}$. The MMC for a single step over the interval $[0,\eps]$ is:
\begin{align}
    \MC_\eps(p_0) 
    &= D(p_0\|q) - D(G_\eps p_0\|G_\eps q) \nonumber\\
    &=D(p_0\|q) - D(e^{\eps K} p_0\|e^{\eps K} q).
\end{align}
This is a KL divergence contraction. Now, define the continuous KL contraction by taking the limit of these discrete-time contractions:
\begin{align}
    D_K(p\|q)
    &=\lim_{\eps \downarrow 0} \frac{D(p\|q) - D(G_\eps p\|G_\eps q)}{\eps} \label{eq:cont_mmc_limit}\\
    &=\left.-\frac{d}{d\eps}D(G_\eps p\| G_\eps q)\right|_{\eps=0}.
\end{align}
We denote this as the instantaneous mismatch cost, which can be shown to satisfy the instantaneous relation in regards to the entropy production when $q$ is the optimal prior for the rate of entropy production\cite{premmc2021}:
\begin{equation}
    D_K(p\|q) =\dot \Sigma(p)-\dot\Sigma (q).
\end{equation}
The data-processing inequality for $e^{\eps K}$ implies that the numerator of \eqref{eq:cont_mmc_limit} is non-positive for all $\eps\geq 0$, so
\begin{equation}
    D_K(p\|q)\geq 0.
\end{equation}
Then, for over interval of length $T$, the continuous-time periodic mismatch cost can be defined via a limit of Riemann sums as 
\begin{equation}
    \lim_{N\to\infty}\sum_{n=0}^{N-1} D_K(p_{n\Delta t}\|q)\Delta t = \int_0^T D_K(p_t\|q)dt,
\end{equation}
where $\Delta t = T/N$ and the integrand is for $p=e^{tK}$ with fixed $t$. 

Now, note that the time derivative of the KL divergence for time-dependent $p(t)$ and $q(t)$ is 
\begin{align}
    \frac{d}{dt}D(p\| q) &= \dot p^\top\log \frac{p}{q} + p^\top\left(\frac{\dot p}{p} - \frac{\dot q}{q}\right) \\
    &= \dot p^\top\left(\log \frac{p}{q} + \mathbf{1}\right) - p^\top \frac{\dot q}{q}
\end{align}
Thus, writing $p_t=e^{tK}p$ and $q_t=e^{tK}q$, plugging into the previous equation, and evaluating at $t=0$, we have the identity
\begin{align}
    \left.\frac{d}{dt}D(p_t\| q_t)\right|_{t=0}
    &= (Kp)^\top\log \frac{p}{q} 
    - p^\top \frac{Kq}{q}.
    \label{eq:KL_time_deriv}
\end{align}
This is due to the fact that $\dot p_t = Kp_t$, and $\mathbf{1}^\top K=0$ vanishes since $K$ is a rate matrix. Thus, the instantaneous mismatch cost is
\begin{align}
\label{eq:cont_KL_expansion}
    \MC(q) = D_K(p\| q) &= -(Kp)^\top\log \frac{p}{q} +p^\top \frac{Kq}{q}
\end{align}

Now, we must prove that the minimizer of this is the trajectory-average
\begin{equation}
    \q = \frac{1}{T}\int_0^T p_t dt.
\end{equation}
To show this is a global minimum, again first expand using the previous identity for $D_K(p\|q)$ and grouping like terms,
\begin{align}
    \Delta\MC_T 
    = \int_0^T (Kp_t)^\top\log \frac{q}{\q} + p_t^\top \left(\frac{Kq}{q} - \frac{K\q}{\q}\right)dt.
\end{align}
Note that the two terms have time-dependence in only $Kp_t$ and $p_t$. Thus, by linearity, the first term is
\begin{align}
    \int_0^T (Kp_t)^\top\log \frac{q}{\q}dt 
    &=-T (K\q)^\top \log \frac{\q}{q}.
\end{align}
Similarly, the second term becomes
\begin{align}\label{eq:cont_int_diff}
    \int_0^T p_t^\top\left(\frac{Kq}{q} - \frac{K\q}{\q}\right)dt
    =T \q^\top \frac{Kq}{q},
\end{align}
where the last equality is due to the fact that $K$ is a rate matrix and $ \mathbf{1}^TK\q = 0$. Thus, combining the two, the difference is
\begin{equation}
    \Delta\MC_T =  
    T\left[ \q^\top \frac{Kq}{q} - (K\q)^\top \log \frac{\q}{q} \right],
\end{equation}
which we saw using \eqref{eq:cont_KL_expansion} to be exactly
\begin{equation}
    \MC_T(q)-\MC_T(\q) = T D_K(\q\| q)\geq 0,
\end{equation}
where the inequality follows from the DPI. Thus, this is the minimizer. 
Now, to find the mismatch cost at $\q$, we \eqref{eq:cont_KL_expansion}, but note that the second term in that equation vanishes when integrated with respect to $t$ as it did in \eqref{eq:cont_int_diff}. Also, since $\q$ is constant over time, we can note using \eqref{eq:KL_time_deriv} that the first term is just the time derivative of $D(p_t\|\q)$ for constant $\q$:
\begin{align}
    \MC_T(\bar q) 
    &= -\int_0^T\frac{d}{dt}D(p_t\|\q) dt \\
    &= D(p_0\|\q)-D(p_T\|\q).
\end{align}
This version has a stark resemblance to the result from the discrete time case.  Alternatively, using that $p_T-p_0 = TK\q$, 
\begin{equation}
    \MC_T(\bar q) = S(p_T)-S(p_0)+T(K\q)^\top \log\q.
\end{equation}

\subsection{Time-dependent rate matrices}\label{ap:timedep}
Let the state space $\mathcal{X}$ be finite and let $K:[0,T]\to\mathbb{R}^{d\times d}$ be a continuously differentiable family of rate matrices. Consider the time-inhomogeneous master equation
\begin{equation}
    \frac{d}{dt}p_t = K_tp_t,\quad p_s=p,
\end{equation}
which has solutions
\begin{equation}
    p_t = U(t, s)p,
\end{equation}
where the evolution operator $U(t,s)$ is the unique solution of
\begin{equation}
    \partial_t U(t,s) = K_t U(t, s),\quad U(s,s)=\mathbb{I}.
\end{equation}
Evaluating the left-hand side at $t=s$, we have
\begin{equation}
    \partial_t U(t,s)|_{t=s} = K_s U(s,s) = K_s,
\end{equation}
so Taylor expansion in the first argument gives
\begin{equation}
    U(s+\eps,s)=\mathbb{I}+\eps K_s + O(\eps^2) = e^{\eps K_s}+O(\eps^2).
\end{equation}
Thus, the solutions are exponential to first order. 
Since the KL-divergence is continuously differentiable on the interior of the probability simplex, the second order terms $O(\eps^2)$ do not impact the first-order limit. Thus, assuming $p$ and $q$ have full support, and taking our definition of the continuous KL contraction,
\begin{align}
     D_{K_s}(p\|q)
    &=\lim_{\eps \downarrow 0} \frac{D(p\|q) - D(U(s+\eps,s) p\,\|\,U(s+\eps,s) q)}{\eps} \nonumber\\
    &=\left.-\frac{d}{d\eps}D(e^{\eps K_s}p\| e^{\eps K_s}q)\right|_{\eps=0}.
\end{align}
Therefore, we have an identity equivalent to the one in \eqref{eq:cont_KL_expansion}:
\begin{equation}
    \label{eq:time_dep_cont_KL_expansion}
    \MC(q) = D_{K_s}(p\| q) = -(K_sp)^\top\log \frac{p}{q} +p^\top \frac{K_sq}{q}.
\end{equation}
Alternatively, for fixed $q$, the instantaneous mismatch cost at time $t$ is 
\begin{equation}
    \MC(q) = -\frac{d}{dt} D(U(s+t, s)p\,\|\,q) + p^\top \frac{K_sq}{q},
\end{equation}
since $\dot q=0$.

Consider a continuous process over an interval $[0, T]$ with a time-scaled rate matrix $K_t = w(t)K$, where $K$ is a time-independent rate matrix. 
In addition, \eqref{eq:time_dep_cont_KL_expansion} is linear in $K_t$, so
\begin{equation}
    D_{K_t}(p\|q) = w(t)D_{K}(p\|q).
\end{equation}
Define the trajectory weighted average to be
\begin{equation}
    \q_w:=\frac{1}{\int_0^T w(t)dt}\int_0^T w(t)p_tdt,
\end{equation}
where $p_t = U(t, 0)p_0$.
Repeating the steps of the previous section, 
\begin{align}
    \Delta\MC_T &= \int_0^T w(t) (Kp_t)^\top\log \frac{q}{\q_w}  \nonumber \\
    & +\int_0^T w(t)  p_t^\top \left(\frac{Kq}{q} - \frac{K\q_w}{\q_w}\right)dt,
\end{align}
we get the that the first term is
\begin{equation}
     \int_0^T w(t)(Kp_t)^\top\log \frac{q}{\q_w}dt  =-W (K\q_w)^\top \log \frac{\q_w}{q},
\end{equation}
where $W:=\int_0^T w(t)dt$. Similarly, the second term becomes
\begin{equation}
    \int_0^T w(t)p_t^\top \left(\frac{Kq}{q} - \frac{K\q_w}{\q_w}\right)dt = W \q_w^\top \frac{Kq}{q} -0.
\end{equation}
Thus, the difference is
\begin{equation}
    \Delta\MC_T =  
    W\left[ \q_w^\top \frac{Kq}{q} - (K\q_w)^\top \log \frac{\q_w}{q} \right],
\end{equation}
which gives the minimality of $\q_w$ via
\begin{equation}
    \MC_T(q)-\MC_T(\q_w) = WD_K(\q_w\| q)\geq 0.
\end{equation}

Plugging $\q_w$ back into the total mismatch cost, and since $p_t=U(t,0)p_0$, we have $\dot p_t = K_t p_t = w(t)Kp_t$,
\begin{align}
    \MC_T(\q_w) = &-\int_0^T \dot p_t^\top\log \frac{p_t}{\q_w} dt \nonumber \\
    &+ \int_0^Tw(t)p_t^\top \frac{K\q_w}{\q_w}\ dt.
\end{align}
Since the second term goes to zero, 
\begin{align}
    \MC_T(\q_w) = &-\int_0^T \dot p_t^\top\log p_t  dt
    + W (K\q_w)^\top \log \q_w.
\end{align}
Thus, the value of the MMC over time interval $[0, T]$ at the optimal weighted prior $\q_w$ is
\begin{equation}
    \MC_T(\q_w) =  S(p_T) -S(p_0) + W (K\q_w)^\top \log \q_w.
\end{equation}
Alternatively, we have
\begin{align}
    \MC_T(\bar q_w)
    &= -\int_0^T \dot p_t^\top\log \frac{p_t}{\q_w} dt \nonumber\\
    &=D(p_0||\q_w)-D(p_T||\q_w).
\end{align}

\section{DFAs}\label{ap:DFAs}
\subsection{DFA background and example}
\emph{Deterministic Finite Automata (DFAs)---}
Formally, a DFA is a 5-tuple $(R,A,r^\varnothing,R^a,\rho)$, where $R$ is a finite set of computational states, $A$ is a finite input alphabet, $r^\varnothing\in R$ is the initial state, $R^a\subseteq R$ is the set of accepting states, and $\rho:R\times A \rightarrow R$ is the deterministic transition function \cite{Ouldridge_2023}. Given an input string $\lambda=(\lambda_1,\ldots,\lambda_L)$, the automaton begins in $r^\varnothing$ and successively updates its state according to
\begin{equation}
r_{t+1}=\rho(r_t,\lambda_t).
\end{equation}

The goal of a DFA is language recognition. A string a accepted if after the program halts, it is in the accept state. The set of all accepted strings is called a language. One can model a DFA as halting as soon as the accept state is reached, or one can model it as having a fixed number of $N$ iterations, and checking whether after $N$ iterations, the DFA is in the accept state. We will use the latter in this paper. 

As an example, consider the DFA shown in Fig.~\ref{fig:DFAplot}. This DFA processes input strings from the alphabet $\Lambda=\{0,1\}$ and accepts strings that are integer representations of numbers divisible by $3$. The automaton is initialized in the state $r^\varnothing$ with certainty and processes a string of symbols drawn i.i.d. from $\Lambda$. We assume each symbol is generated with $\Pr(0)=p$ and $\Pr(1)=1-p$. Under these assumptions, the evolution of the DFA is a discrete-time Markov chain (DTMC) over the computational states $R=\{0,1,2\}$, with transition probabilities determined by the (deterministic) transition function $\rho$ together with the input probabilities. 

The dynamics are completely characterized by the initial distribution $p_{0,i}=\delta_{0,i}$ and the stochastic transition matrix
\begin{equation}
G=
\begin{bmatrix}
p & 1-p & 0 \\
1-p & 0 & p \\
0 & p & 1-p  \\
\end{bmatrix},
\end{equation}
More generally, every deterministic transition labeled by the symbol $0$ contributes a probability $p$ to the corresponding entry of $G$, while transitions labeled by $1$ contribute a probability $1-p$. Thus, although the DFA transition function is deterministic, the stochasticity of the input stream induces a Markov chain on the computational states.

\subsection{Proof of asymptotic minimal dissipation scaling monotonically with redundant cycles}
\label{ap:dfa_proofs}
\begin{proof}
Let $\widetilde{\mathcal{A}}$ be a redundant (non-minimal) DFA with states $\widetilde{\mathcal{X}}$ and $\mathcal{A}$ a coarser, equivalent (accepts the same language) DFA with states $\mathcal{X}$. Suppose there is a surjective deterministic map between the state spaces\footnote{Indeed, this will exist from how we construct our redundant DFAs by adding redundant states.} $c:\widetilde{\mathcal{X}}\to \mathcal{X}$, such that, symbol-by-symbol (i.e., reading a 0 bit vs reading a 1 bit) we have the equivariance relation between the transition graphs for each symbol $b\in\Lambda$:
\begin{equation}
    C\widetilde{T}_b = T_b C,
\end{equation}
where the matrix $C$ is given by indicating which states $y\in\widetilde{\mathcal{X}}$ are redundancies of $x\in\mathcal{X}$:
\begin{equation}
    C_{xy} =\mathbf{1}{\{c(y)=x\}}.
\end{equation}
Then, by definition we have $G$ as a convex sum over the per-symbol operators $T_b$, i.e., $G = \sum_b p_b T_b$. Consequently, for the stochastic operators we also have the equivariance relation 
\begin{equation}
    C\widetilde{G} = GC.
\end{equation}
Now, if the induced action of $C$ over distributions of states follows $\tilde{p}_0\mapsto p_0 = C\tilde{p}_0$, then, at every iteration $n$, we have
\begin{equation}
    C \tilde{p}_n=C\widetilde{G}^n \tilde{p}_0 = G^nC\tilde{p}_0 = G^np_0  = p_n.
\end{equation}
Now, assuming $G$ and $\tilde G$ have unique fixed points $\pi$ and $\tilde{\pi}$, then the above equivariance relation gives
\begin{equation}
    GC\tilde{\pi} = C\widetilde{G}\tilde{\pi} = C\tilde{\pi},
\end{equation}
so $C\tilde{\pi}$ is a fixed point of $G$, and, in particular, $C\tilde{\pi} = \pi$. Now, using the data-processing inequality with respect to $C$, we have
\begin{equation}
    D(\tilde{p}_0\| \tilde{\pi}) \geq D(C\tilde{p}_0\| C\tilde{\pi}) = D(p_0\|\pi).
\end{equation}

Now, to show the conditions under which the inequality is strict, let $\Gamma_x = C^{-1}(x)\subseteq\widetilde{\mathcal{X}}$ be the fiber of redundant states mapping to state $x\in\mathcal{X}$ in the coarse DFA under $C$. Because $X = c(\tilde{X})$ is a deterministic function of $\tilde{X}$, the KL chain rule gives
\begin{align}
    D(\tilde{p}_0\| \tilde{\pi}) - D(p_0\|\pi) = \sum_{x\in\mathcal{X}} p_0(x) D(\tilde{p}_0(\cdot|x)\| \tilde{\pi}(\cdot|x)).
\end{align}
Note that here, for 
$y\in\widetilde{\mathcal{X}}$, we have 
$\tilde{p}_0(y|x) = \frac{\tilde{p}_0(y, x)}{p_0(x)}$. Since $X$ is a deterministic function of $\tilde{X}$, if $y\in \Gamma_x$, then knowing 
$\tilde{X}=y$ automatically implies $X=x$, so the joint is equal to the marginal in this case: 
$\tilde{p}_0(y, x) = \tilde{p}_0(y)$.
From the chain rule, we have equality if and only if
\begin{equation}
    \tilde{p}_0(\cdot | x) = \tilde{\pi}(\cdot | x)
\end{equation}
for every $x$ with $p_0(x)>0$. Equivalently, within each occupied fiber,
\begin{equation}
    \frac{\tilde{p}_0(y)}{\tilde{\pi}(y)}
\end{equation}
must be constant over that fiber. Suppose the DFA has initial state $p_0 = \delta_{x_0}$ and the redundant DFA begins in $\tilde{p}_0=\delta_{y_0}$, where $Cy_0=x_0$. Then, $\tilde{p}_0(\cdot| x_0) = \delta_{y_0}$ from the conditional probability formula with $p_0(x_0)=1$. Thus, 
\begin{align}
    D(\tilde{p}_0\| \tilde{\pi}) - D(p_0\|\pi) 
    &= D(\delta_{y_0}\| \tilde{\pi}(\cdot|x_0)) \nonumber\\
    &= -\log \tilde{\pi} (y_0|x_0) \nonumber\\
    &= -\log\frac{\tilde{\pi}(y_0)}{\pi(x_0)}
\end{align}
since the marginal of $\tilde{\pi}$ is the same as the joint. Since
\begin{equation}
    \pi(x_0)
    = (C\tilde{\pi})(x_0)
    = \sum_{y\in \Gamma_{x_0}} \tilde{\pi}(y),
\end{equation}
we always have $\tilde{\pi}(y_0)\leq \pi(x_0)$. If $\tilde{\pi}(y)>0$ for all $y$, then if there are more than one states in the preimage of $x_0$ under $C$, $|\Gamma_{x_0}|>1$, we must have $\tilde{\pi}(y_0) < \pi(x_0)$. Thus, 
\begin{equation}
    D(\tilde{p}_0\| \tilde{\pi}) > D(p_0\|\pi)
\end{equation}
if and only if $|\Gamma_{x_0}|>1$, i.e., an increase in the bound is governed solely by its redundancy in the initial state.
\end{proof}

\bibliography{references}

\end{document}